\documentclass[aps,pra,amssymb,superscriptaddress,10pt]{revtex4-2}
\usepackage{braket}
\usepackage{graphicx}
\usepackage{amsthm}
\usepackage{amsmath}
\usepackage{amssymb}
\usepackage{color}
\usepackage{soul}
\usepackage{url}
\usepackage{mathptmx}
\usepackage{times}
\usepackage{txfonts}

\newcommand{\Bg}[1]{\mathcal{B}_{#1}}
\newcommand{\Bgd}[1]{\mathcal{B}_{#1}^\dagger}
\newcommand{\Br}[1]{\mathcal{B}_{#1}}

\begin{document}

\title{Entanglement-Inducing Quantum Markov Processes}
\author{Jonas Fransson\href{https://orcid.org/0000-0002-9217-2218}{\includegraphics[scale=0.05]{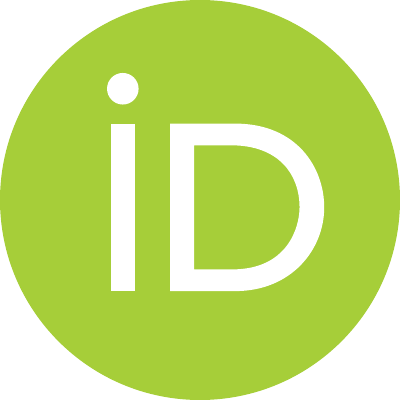}}}
\affiliation{Department of Physics and Astronomy, Materials Theory, Uppsala University, 751 37 Uppsala, Sweden}
\email{jonas.fransson@physics.uu.se}
\author{Artur P. Sowa\href{https://orcid.org/0000-0001-7833-2716}{\includegraphics[scale=0.05]{orcidid.pdf}}}
\affiliation{Department of Mathematics and Statistics, University of Saskatchewan, Saskatchewan S7N 5E6, Canada}
\email{sowa@math.usask.ca}


\maketitle
\newtheorem{definition}{Definition}
\newtheorem{theorem}{Theorem}
\newtheorem{proposition}{Proposition}
\newtheorem{lemma}{Lemma}
\newtheorem{corollary}{Corollary}
\newtheorem{algorithm}{Algorithm}
\newtheorem{conjecture}{Conjecture}

\begin{center}

 Abstract

 \end{center}
We introduce a new model for a system of interacting bosons placed in an array of sites. At its core is a nonlinear, nonlocal evolution equation, which we have dubbed the Schr\"{o}dinger--Dirichlet equation. The construction is closely related to the Bose--Hubbard model and to a specific type of generalized bosons. In contrast to conventional mean-field closures, the resulting nonlinear dynamics need not preserve product structure and can generate entanglement from initially separable states. The relevant methods of analysis are based on harmonic analysis for the multiplicative group of positive rationals.

\vspace{.2cm}

  \noindent KEYWORDS: harmonic analysis on the multiplicative group of positive rationals, number-theoretic methods in quantum physics
  \vspace{.2cm}


\section{Introduction}

This work is the latest step in a natural progression of concepts pertaining to the mathematical treatment of an infinite array of boson sites. It was demonstrated in \cite{SF22} that such structures are naturally amenable to analysis via a number-theoretic framework in which the Fock space is represented as $\ell_2(\mathbb{N})$ and the bosonic creation and annihilation operators act multiplicatively on states. This setting was further developed in \cite{SF25}, leading to the introduction of nonlocal coherent states. Further properties of such coherent states were subsequently investigated in \cite{FSS26}. Besides proving fruitful for the analysis of bosons, this framework also naturally leads to the introduction of generalized bosons of a specific type.

Generalized bosons are foundational to the present work. In essence, the generalized boson creation and annihilation operators retain the structural features of the canonical bosonic operators: they modify the local particle numbers of a state. However, they additionally temper amplitudes, which renders them bounded operators, see Section  \ref{Subsection_gen_boson}. In many respects this makes the analysis of various structures technically simpler than in the canonical bosonic setting. Remarkably, the number operators associated with bosons and generalized bosons become statistically indistinguishable in the limit of a large grand canonical ensemble, see Subsection \ref{Subsection_gen_boson}. This conclusion is justified by the seminal Erd\H{o}s--Kac theorem from probabilistic number theory, \cite{Erdos-Kac}.

Within the framework of generalized bosons, every state may be naturally expressed as arising from the vacuum state through the action of a suitable creation operator. Repeated application of such state-dependent operators gives rise to a form of nonlinearity that we explore here. The underlying mathematical structure is that of Dirichlet series. Briefly, the construction begins with the transform
\[
\ket\psi = \sum_{n} x_n \ket n \mapsto \Bgd{\psi} = \sum_{n} x_n \hat{b}_n^\dagger
\]
which assigns to a Fock-space state a generalized-bosonic creation operator. Remarkably, the resulting nonlinear operation
\[
\ket\psi \mapsto \Bg{\psi} \Bgd{\psi} \ket\psi
\]
need not preserve product structure and can generate entanglement from initially separable states. This stands in contrast to conventional mean-field approximations, which restrict the many-body dynamics to a prescribed product-state structure.

Pursuing this line of reasoning has led us to propose a nonlinear model for the description of certain regimes of interacting bosons on an infinite array of boson sites. The model may be viewed either as describing an array of generalized boson sites or as an effective model for an array of canonical boson sites. The proposed dynamics are based on a universal evolution equation with a cubic nonlinear term, (\ref{SDE_main_Four}). The model retains the usual Bose--Hubbard site-local energy term while reworking the interaction term. We have dubbed the resulting equation the Schr\"{o}dinger--Dirichlet equation (SDE).

The equation may be viewed as describing a quantum Markov process. Apart from its pure-state formulation, it also admits a generalization to the evolution of mixed states, (\ref{Boson-Schrod-Mixed}). Unlike the well-known dissipative Markovian master equations, the SDE is not dissipative in character. Rather than driving decoherence or relaxation toward classical mixtures, it is natural to expect the nonlinear dynamics to generate entanglement.

The nonlinearity arises from a nonlocal self-interaction energy. Due to the specific structure of this interaction, it is natural to analyze the model within the framework of harmonic analysis on the multiplicative group of positive rationals $\mathbb{Q}_+$ and its dual, the infinite-dimensional torus $\mathbb{T}^\omega$, an approach first introduced in this context in \cite{SF22}. We summarize the basic features of this apparatus in Section~\ref{Appendix_FT_Q_plus}.

The Fourier-dual reformulation of the SDE highlights both formal similarities and essential differences between the present model and models based on the Gross--Pitaevskii equation or Ginzburg--Landau theory, now classical in the boson lattice literature; see, e.g., \cite{Polkov}, \cite{Grass}. In those models, the bosonic creation and annihilation operators are replaced by classical c-numbers, leading to a cubic nonlinearity that is local in the spatial variables. In contrast, the principal nonlinear term of the SDE is nonlocal.

The use of generalized Fourier transform also leads to discretization of the SDE. This in turn enables the resolution of the eigenvalue problem in the single particle subspace, see Subsection \ref{Subsect_finite_sols}. Furthermore, the universal equation gives rise to infinitely many systems of Hamiltonian evolution equations, simply by assuming that the state is supported on a fixed and finite number of sites and has a fixed number of particles, see example in Subsection \ref{Subsection_SDE}.   

The deepest question is the existence of the DSE ground state, which reduces to the problem of existence of the global energy-minimizing state in the fixed number of particles subspace. This part of analysis is presented in Section \ref{Section_minimizer}. We have not been able to settle that question directly. However, we give a satisfactory answer in the case of a tempered energy functional, see Theorem \ref{theorem_minimizer}. 

In summary, we believe the present approach is fruitful and nuanced enough to give insight into bosonic and generalized bosonic systems that goes beyond the standard theory. The main enabling features of this approach are: replacing unbounded operators by bounded ones, analytic formulation. It needs to be pointed out that the analysis in question takes place on a space that is not a differential manifold, and in this sense is best characterized as being exo-geometric.      

\section{Prequel: classical entropy-producing Markov processes with evolutionary features}
\label{subsection_Markov}

Consider probability distributions $P,Q : \mathbb{N} \rightarrow [0,1] $, which we represent as vector columns  $P = [p_1, p_2, p_3, \ldots ]^T$ and $Q = [q_1, q_2, q_3, \ldots ]^T$, where $\sum_{n} p_n = \sum_n q_n = 1$. Recall the Dirichlet convolution of sequences $P$ and $Q$, say $R = P\star Q$, is a sequence $R = [r_1, r_2, r_3, \ldots ]^T$ defined via
\begin{equation}\label{Dirichlet_PQ}
  r_n = \sum_{d|n} p_d q_{n/d},\quad \mbox{ for } n = 1,2,3, \ldots
\end{equation} 
where the sum is over all divisors $d$ of $n$. As is well known,  $R$ is, again, a probability distribution. A simple way to reach this conclusion is via the following interpretation. Namely, consider an infinite matrix constructed with the entries of $P$, namely: 
\begin{equation}\label{thePhat}
\hat{P} =
 \left[\begin{array}{cccccccccc}
 &p_1 &\cdot &\cdot &\cdot &\cdot &\cdot &\cdot &\cdot &\ldots\\
  &p_2  &p_1 &\cdot &\cdot &\cdot &\cdot &\cdot &\cdot &\ldots\\
  &p_3 &\cdot &p_1 &\cdot &\cdot &\cdot &\cdot &\cdot &\ldots\\
  &p_4 &p_2 &\cdot &p_1 &\cdot &\cdot &\cdot &\cdot &\ldots\\
  &p_5 &\cdot &\cdot &\cdot &p_1 &\cdot &\cdot &\cdot &\ldots\\
  &p_6 &p_3 &p_2 &\cdot &\cdot &p_1 &\cdot &\cdot &\ldots\\
  &p_7 &\cdot &\cdot &\cdot &\cdot &\cdot &p_1 &\cdot &\ldots\\
  &p_8 &p_4 &\cdot &p_2  &\cdot &\cdot &\cdot &p_1&\ldots\\
 &\vdots &\vdots &\vdots &\vdots &\vdots  &\vdots &\vdots &\vdots &\ddots
 \end{array}\right] ,
\end{equation}
where single dots stand for zeros. 
It is then easily seen that $R = \hat{P} Q$. However, $\hat P$ is a Markov matrix as the sum of entries in each column is $1$. Therefore, $R$ is a probability distribution. Note that the convolution is commutative and we could have alternatively said $R = \hat{Q} P$ with an obvious interpretation of $\hat Q$. 

Recall that the entropy of $P$ is defined as $E(P) = -\sum p_k\log p_k$ where, say, we chose $\log$ to denote the natural logarithm. It is understood that $0 \log 0 = \lim_{x\rightarrow 0+} x \log x = 0$. Also, we will say that $P$ is supported at one point if there is $k$, such that $p_k=1$. The following observation plays important role in what is to follow: 
\begin{lemma}\label{Lem_Entropy}
For $R = P \star Q$, we have $E(R) \geq \max\{E(P),E(Q)\}$ with equality if and only if $P$ or $Q$, or both, are supported at one point.
\end{lemma}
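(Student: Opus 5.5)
By symmetry of the Dirichlet convolution it suffices to prove $E(R)\ge E(Q)$, with equality iff $P$ or $Q$ is supported at one point. The plan is to view $R$ as the law of the product of two independent random variables and then use the standard facts that entropy cannot decrease under "adding independent information" and cannot increase under a deterministic map.

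Let $X,Y$ be independent $\mathbb{N}$-valued random variables with laws $P$ and $Q$. Then, by (\ref{Dirichlet_PQ}), $R$ is the law of $Z=XY$, and we work with Shannon entropies $H(\cdot)$. The key steps, in order, are:

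\begin{enumerate}
\item Since conditioning cannot increase entropy, $H(Z)\ge H(Z\mid X)$.
\item Given $X=d$, the map $y\mapsto dy$ is injective, so by independence
\[
H(Z\mid X)=\sum_d p_d\,H(dY)=\sum_d p_d\,H(Y)=H(Y)=E(Q).
\]
\end{enumerate}

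Steps 1 and 2 give $E(R)\ge E(Q)$, and likewise $E(R)\ge E(P)$. Entropies may be infinite; in that case the inequality still holds in the extended sense. For the equality analysis I will assume the relevant entropies are finite.

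For the equality case, suppose first that $P$ is supported at $d$. Then $Z=dY$ and $E(R)=E(Q)$. We also have $E(P)=0$, so $E(R)=\max\{E(P),E(Q)\}$. The case where $Q$ is supported at one point is symmetric.

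The converse needs more care. Suppose $E(R)=\max\{E(P),E(Q)\}$, and say the maximum is $E(Q)$. Then $H(Z)=H(Z\mid X)$, so $I(Z;X)=0$, i.e.\ $Z$ and $X$ are independent. I must deduce that $X$ or $Y$ is constant. Suppose neither is. Let $a<a'$ be the two smallest values in the support of $X$, and let $b$ be the smallest value in the support of $Y$. Independence of $Z$ and $X$ means the conditional law of $Z$ given $X=a$, namely the law of $aY$, equals the conditional law of $Z$ given $X=a'$, namely the law of $a'Y$. But the minimum of the support of $aY$ is $ab$, while that of $a'Y$ is $a'b>ab$. This is a contradiction, so one of $X,Y$ is constant.

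This minimum-of-support argument is the step I expect to be the main obstacle. It is short, but it depends on the positivity and ordering of $\mathbb{N}$ rather than on general information inequalities.

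Finally, I would remark that a reader preferring the paper's matrix language could rephrase step 1 as concavity of $t\mapsto -t\log t$ applied column-wise to $\hat P$. The probabilistic formulation gives the sharp equality case most cleanly, so I would present it that way.
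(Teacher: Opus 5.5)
Your proof is correct and is essentially the paper's argument in probabilistic language: the paper writes $E(R)=-\sum_k q_k\sum_j p_j\log r_{kj}$ and applies Gibbs' inequality to each inner sum, which is exactly the explicit proof of $H(XY)\ge H(XY\mid Y)=H(X)$, and its equality condition ($r_{kj}=p_j$ for all $j$) is your independence of $Z$ from the conditioning variable. Your minimum-of-support argument supplies the step the paper only asserts (that this forces $q_k=1$), and your finiteness caveat is genuinely needed, since if $E(P)=\infty$ then $E(R)=\infty=\max\{E(P),E(Q)\}$ even when neither distribution is a point mass.
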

\begin{proof}
  We will use a generalization of well-known fact, \cite{Ash}. Let $[x_1,x_2,x_3,\ldots ]$ be a sequence of positive numbers satisfying $\sum_k x_k \leq 1$. For  $p_k>0$, one has $\log (x_k/p_k) \leq x_k/p_k -1$  with equality if and only if $x_k = p_k$. It follows that 
  \[
  \sum_k p_k \log\frac{x_k}{p_k} \leq \sum_k (x_k - p_k) \leq 1-1= 0 ,\quad \mbox{ with equality iff } p_k = x_k \mbox{ for all } k.
  \]
  Thus, 
  \begin{equation}\label{basic_estim}
     -\sum_k p_k\log p_k \leq  -\sum_k p_k\log x_k,\quad \mbox{ with equality iff } p_k = x_k \mbox{ for all } k.
  \end{equation}
  Next, observe the following identity
  \[
   E(R) = -\sum r_n \log r_n = - q_1 \sum_j p_j \log r_j - q_2 \sum_j p_j \log r_{2j} - q_3 \sum_j p_j \log r_{3j} - \ldots
  \]
  Note that whenever $r_{kj} = 0$, then $q_kp_j = 0$, in which case $q_kp_j\log r_{kj} = q_kp_j \log p_j =0$. Together with (\ref{basic_estim}), this implies
    \[
  - q_k \sum_j p_j \log r_{kj} \geq  - q_k \sum_j p_j \log p_j.
  \] 
Thus,
  \[
   E(R) \geq - \sum_k q_k \, \sum_j p_j\log p_j   = E(P).
\]
  Note that equality holds only if for a certain $k$ we have $r_{kj} = p_j$ for all $j$, which implies $q_k =1$, i.e. $Q$ is supported at one point. Since the roles of $P$ and $Q$ can be switched, this proves the lemma. 
\end{proof}

\subsection{A discrete Markov process based on the Dirichlet convolution}

We are interested in Markov processes obtained via the Dirichlet convolution. First, we define the discrete version:
\begin{equation}\label{Markov_k}
  P^{(1)} = P, \quad P^{(k+1)} = P\star P^{(k)}, \quad \mbox{ for } k = 2,3,4,\ldots 
\end{equation} 
Observe that if $P$ is a stationary distribution under this process, then $p_1 =1$. Therefore, $P = [ 1,0,0,\ldots]^T$ is the only stationary point of the process. Also, excluding the case when $P$ is supported at one point, the entropy is increasing at every step. Furthermore, the evolution of the first $N$ terms of the distribution does not depend on any other entries of the sequence. Indeed,
 \begin{equation}\label{Markov_k_eq}
 \left[
   \begin{array}{c}
     p_1^{(k+1)} \\
     p_2^{(k+1)} \\
     p_3^{(k+1)} \\
     \vdots \\
     p_N^{(k+1)} \\
   \end{array}
 \right]
 =
 \left[\begin{array}{cccccccccc}
 &p_1 &\cdot &\cdot &\cdot &\cdot &\cdot &\cdot &\cdot\\
  &p_2  &p_1 &\cdot &\cdot &\cdot &\cdot &\cdot &\cdot \\
  &p_3 &\cdot &p_1 &\cdot &\cdot &\cdot &\cdot &\cdot \\
    &p_4 &p_2 &\cdot &p_1 &\cdot &\cdot &\cdot &\cdot \\
 &\vdots &\vdots &\vdots &\vdots &\vdots  &\vdots &\vdots &\vdots \\
  &p_N & p_{N/2} & p_{N/3} & \ldots  &\ldots &\ldots &\ldots &p_1\\
 \end{array}\right] 
 \left[
   \begin{array}{c}
     p_1^{(k)} \\
     p_2^{(k)} \\
     p_3^{(k)} \\
     \vdots \\
     p_N^{(k)} \\
   \end{array}
 \right]
\end{equation}
where we adopt a convention that $p_{N/k} = 0$ whenever $k$ is not a divisor of $N$. Let $S_N^{(k)} = p_1^{(k)} + \ldots p_N^{(k)}$. Note that the sum of the entries in every column of the matrix above does not exceed $S_N = S_N^{(1)}$. Therefore, 
\begin{equation}\label{S_N_bound}
  S_N^{(k)} \leq (S_N)^k.
\end{equation}
Thus, whenever $S_N <1$ we have $ S_N^{(k)} \rightarrow 0$ when $k$ tends to infinity. In particular, if the initial distribution $P$ has infinite support, then $ S_N^{(k)} \rightarrow 0$ for all $N$, i.e., in successive iterations the weight is transferred toward increasingly ``remote locations". Consider the complementary case, when the initial distribution is finitely supported. Let $n>1$ be the largest index such that $p_n>0$. In such a case, the remotest locations with nonzero weight are consecutively $p^{(2)}_{n^2}>0$, $p^{(3)}_{n^4}>0$, $p^{(4)}_{n^8}>0$, etc. Thus, the consecutive distributions remain finitely supported but the weight is again transferred toward more and more remote locations.  

\subsection{A continuous Markov process based on the Dirichlet convolution}

We now introduce a continuous-time Markov process, analogous to (\ref{Markov_k}), namely:
\begin{equation}\label{Markov_t}
\frac{d}{dt} P(t) = P(t)\star P(t) - P(t), \quad \mbox{ for } t\in [0,\infty) \mbox{ with the initial condition }   P(0).
\end{equation}
Since we know of no reference to this process, we will start the discussion with a few elementary observations. First, since $P(t)\star P(t)$ is a probability distribution, it follows that
\begin{equation}\label{sum_P}
  \frac{d}{dt} \sum_{n} p_n(t) = 0.
\end{equation}
Second, as in the discrete case, the evolution of the first $N$ terms in the series $P(t)$ does not depend on the rest of the coefficients. Specifically,
 \begin{equation}\label{Markov_t_eq}
 \frac{d}{dt}\,\,\left[
   \begin{array}{c}
     p_1(t) \\
     p_2(t) \\
     p_3(t) \\
     p_4(t) \\
     \vdots \\
     p_N(t) \\
   \end{array}
 \right]
 =
 \left[\begin{array}{cccccccccc}
 &p_1(t)-1 &\cdot &\cdot &\cdot &\cdot &\cdot &\cdot &\cdot\\
  &p_2(t)  &p_1(t)-1 &\cdot &\cdot &\cdot &\cdot &\cdot &\cdot \\
  &p_3(t) &\cdot &p_1(t)-1 &\cdot &\cdot &\cdot &\cdot &\cdot \\
    &p_4(t) &p_2(t) &\cdot &p_1(t)-1 &\cdot &\cdot &\cdot &\cdot \\
 &\vdots &\vdots &\vdots &\vdots &\vdots  &\vdots &\vdots &\vdots \\
  &p_N(t) & p_{N/2}(t) & p_{N/3}(t) & \ldots  &\ldots &\ldots &\ldots &p_1(t)-1\\
 \end{array}\right] 
 \left[
   \begin{array}{c}
     p_1(t) \\
     p_2(t) \\
     p_3(t) \\
    p_4(t) \\
     \vdots \\
     p_N(t) \\
   \end{array}
 \right]
\end{equation}
This is a system of differential equations with polynomial entries on the right-hand side. The existence of solutions for finite time and their uniqueness follows from the standard theory. Furthermore,
\begin{equation}\label{p1_t}
  \frac{d}{dt}p_1 = p_1^2-p_1.
\end{equation}
Note this is the logistic equation with reversed time. We easily obtain the conclusion that either  $p_1(t) \equiv 1$, or $p_1(t) = 1/(1+ce^t)$, where  $c = -1+ 1/p_1(0)$. From now on, we are interested only in the latter case, i.e., we assume $p_1(0)<1$. We do have 
\[
\frac{d}{dt} p_k(t) = \sum_{d|k} p_d(t) p_{k/d}(t) - p_k(t) \geq -p_k(t)\quad \mbox{ for } k = 1, 2, \ldots N.
\] 
It follows that
\begin{equation}\label{p_positive}
   p_k(t)\geq p_k(0)\,e^{-t}. 
\end{equation}
Thus, all terms remain non-negative which together with (\ref{sum_P}) implies that the right hand side of (\ref{Markov_t_eq}) is given by polynomials on the cube $[0,1]^{\times N}$. Therefore, solutions exist for all times. Since $N$ is arbitrary, $P(t): \mathbb{N}\rightarrow [0,1]$ is a uniquely defined probability distribution  that exists for all times .  
 
Furthermore, reasoning as in the case of the discrete process, we observe that the partial sum $S_N(t) = p_1(t) + \ldots + p_N(t)$ satisfies
\begin{equation}\label{bound_S_N_t}
  \frac{d}{dt} S_N(t) \leq S_N(t)^2 - S_N(t).
\end{equation} 
Had there been equality, this would have been the logistic equation with reversed time. Recall that we assume $p_1(0)<1$. It follows that for any $t>0$ we have $S_N(t) <1$, i.e., after arbitrarily short time, $P(t)$ has infinite support. Indeed, if $p_n(0)>0$ then either $p_{n^2}$ is already positive, or its derivative is positive. This means that in arbitrarily short time $p_{n^k}$ will turn positive, if they have not been so from the start.   
Therefore, fixing an arbitrary $\epsilon >0$, we have
\begin{equation} \label{bound_S_N_t_expl}
S_N(t) \leq \frac{1}{1+ce^t}\quad  \mbox{ for }\,\, t \geq\epsilon >0, \quad \mbox{ where } c = \frac{1}{S_N(\epsilon)} -1. 
\end{equation}

Next, we demonstrate that the entropy of $P(t)$ is increasing. For simplicity, we consider the generic case when $p_n(0) > 0$ for all $n$. We have the following:
\begin{theorem}\label{Theor_Entropy}
  Let $P = P(t)$ satisfy (\ref{Markov_t}) with the generic initial condition, i.e. $p_n(0) >0 $ for all $n$. Then, the entropy of $P$ is strictly increasing at all times, i.e., 
  \[
  \frac{d}{dt} E(P) >0.
  \] 
\end{theorem}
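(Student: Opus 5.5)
We differentiate the entropy term by term and compare against Lemma \ref{Lem_Entropy}. Write $R(t) = P(t)\star P(t)$, so (\ref{Markov_t}) reads $\dot p_n = r_n - p_n$. Assuming for the moment that differentiation under the sum is justified, we get
\[
\frac{d}{dt}E(P) = -\sum_n \dot p_n (\log p_n + 1) = -\sum_n (r_n - p_n)\log p_n ,
\]
where the term $-\sum_n \dot p_n$ drops out by (\ref{sum_P}). By (\ref{p_positive}) and the generic assumption, $p_n(t)>0$ for all $n$ and $t$, so every logarithm is finite. We therefore need to show
\[
-\sum_n r_n \log p_n > -\sum_n p_n\log p_n = E(P).
\]

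Apply the Gibbs-type inequality (\ref{basic_estim}) from the proof of Lemma \ref{Lem_Entropy}, with the roles chosen as follows: take $R$ as the weighting distribution and $P$ as the comparison sequence $x_k$. This is allowed because $\sum p_k = 1$ and all $p_k>0$. It gives
\[
-\sum_n r_n\log p_n \;\ge\; -\sum_n r_n\log r_n = E(R),
\]
with equality iff $R=P$. Lemma \ref{Lem_Entropy} then gives $E(R)\ge E(P)$, with equality only if $P$ is supported at one point. That case is excluded, since all $p_n>0$ (and in fact $p_1<1$). So $E(R) > E(P)$ strictly, and combining the two inequalities yields $\frac{d}{dt}E(P) \ge E(R)-E(P) > 0$.

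The main obstacle is analytic: justifying term-by-term differentiation of $E(P(t))=-\sum_n p_n\log p_n$, and making sure the quantities are finite. If $E(P(t))=\infty$, the statement is vacuous or must be interpreted separately, so we would assume $E(P(0))<\infty$ and first show $E(P(t))<\infty$. For that, we would work with the finite truncations $E_N = -\sum_{n\le N} p_n\log p_n$. By (\ref{Markov_t_eq}) these depend smoothly on the finitely many coordinates, so $\frac{d}{dt}E_N = -\sum_{n\le N}(r_n-p_n)(\log p_n+1)$ holds exactly. We then need locally uniform convergence of these derivative sums in $t$. The lower bound $\log p_n \ge \log p_n(0) - t$ from (\ref{p_positive}) gives $|\log p_n(t)|\le |\log p_n(0)|+t$ wherever $p_n\le 1$. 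Combined with $r_n,p_n\le 1$ and summability of $r_n|\log p_n(0)|$, this should supply the needed domination. The last ingredient is a weighted moment bound such as $\sum_n r_n |\log p_n(0)| <\infty$, which we would try to control through $E(R)$ and $E(P)$. This step requires the most care; the inequality chain above is the conceptual core.
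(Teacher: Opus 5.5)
Your argument is correct and is essentially the paper's proof: differentiate, drop $-\sum_n \dot p_n$ via (\ref{sum_P}), and chain the Gibbs inequality (\ref{basic_estim}) (weights $R$, comparison sequence $P$) with Lemma~\ref{Lem_Entropy} to obtain $\frac{d}{dt}E(P)\ge E(R)-E(P)>0$. The paper differentiates term by term formally and never addresses finiteness, so your analytic paragraph is extra rigor rather than a missing step. Note, however, that the domination you propose, $\sum_n r_n|\log p_n(0)|<\infty$, can fail for generic data of finite entropy: take $p_{ab}(0)$ doubly exponentially small for distinct primes $a,b$, while $r_{ab}(t)\ge 2p_a(0)p_b(0)e^{-2t}$. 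So that step would need a different bound.
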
 
\begin{proof}
  Let $R(t) = P(t)\star P(t)$. Note that in light of (\ref{p_positive}), $p(t)>0$ for all times. Next, we have
  \begin{eqnarray*}
    \frac{d}{dt} E(P) = -  \frac{d}{dt} \sum_k p_k \log p_k  &=& -  \sum_k \left(  \frac{d}{dt} p_k\right) \log p_k - \sum_k   
    \frac{p_k }{p_k} \, \frac{d}{dt} p_k\\
     &=& -  \sum_k \left(  \frac{d}{dt} p_k\right) \log p_k = -  \sum_k \left(r_k - p_k \right) \log p_k\\
        &=&  -  \sum_k r_k \log p_k  - E(P) \geq  E(R) - E(P) >0,
  \end{eqnarray*}
   where the last two inequalities follow from (\ref{basic_estim}) and from Lemma \ref{Lem_Entropy}. This completes the proof. 
\end{proof}

In summary, generically, the process (\ref{Markov_t}) constantly produces entropy and shifts weight from less complex states $\ket n$ (with $n$ less composite) to more complex states $\ket n$ (with $n$ more composite). 

\subsection{Explicit representation of the Markov processes via harmonic analysis on the group of positive rationals}

We will derive a closed-form solution of (\ref{Markov_t}). It helps to understand the relation between the two processes (\ref{Markov_k}) and (\ref{Markov_t}). To this end, we utilize a mollified version of the Fourier transform for the group of fractions, see Subsection \ref{Appendix_FT_Q_plus}. Namely, for a $\sigma >0$, we send $P$ to a function $\hat{P}_\sigma: \mathbb{T}^\omega \rightarrow \mathbb{C}$, via
\begin{equation}\label{mol_FT}
  P(t) \longrightarrow \hat{P}_\sigma(\vec{\mu}, t) : = \sum_{n} p_n(t)\, n^{-\sigma} e^{2\pi i \vec{n}\cdot \vec{\mu}}.
\end{equation} 
A direct calculation shows that (\ref{Markov_t}) is equivalent to the equation
\begin{equation}\label{Markov_t_FT}
  \frac{d}{dt} \hat{P}_\sigma(\vec{\mu}, t) = \hat{P}_\sigma(\vec{\mu}, t)^2 - \hat{P}_\sigma(\vec{\mu}, t).
\end{equation}
The point is that $\hat{P}_\sigma(\vec{\mu}, t)$ can be manipulated as a scalar rather than a sequence. Next, we assume $p_1(0)>0$ which ensures that for a sufficiently large $\sigma>0$, we have 
\[
|\hat{P}_\sigma(\vec{\mu}, 0)| \geq p_1(0) - \sum_{n=2}^{\infty} p_n(0)\, n^{-\sigma} \geq \epsilon >0 \mbox{ for all } \vec{\mu} \in \mathbb{T}^\omega.
\]  
Of course, we automatically have 
\[
|\hat{P}_\sigma(\vec{\mu}, 0)| \leq \sum_{n=1}^{\infty} p_n(0)\, n^{-\sigma} < 1 \mbox{ for all } \vec{\mu} \in \mathbb{T}^\omega, \quad (\mbox{ provided } p_1(0)<1 ). 
\] 
With these assumptions, we readily obtain
\begin{equation}\label{Markov_t_sol}
  \hat{P}_\sigma(\vec{\mu}, t) = \frac{1}{1+ \hat{C}(\vec{\mu})e^t}, \quad \mbox{ where } \,\, \hat{C}(\vec{\mu}) = \frac{1}{\hat{P}_\sigma(\vec{\mu}, 0)} -1 .
\end{equation}
The formula gives an immediate corollary: If the initial condition $\hat{P}_\sigma(\vec{\mu}, 0)$ effectively depends only on finitely many variables $\mu_p$, then so does the solution for all times. This is equivalent to the statement that $p_n(t)$ can be nonzero only for those $n$ whose prime factors belong to the distinguished finite set of primes. 

It is interesting to observe that using formula (\ref{Markov_t_sol}), one may represent $ \hat{P}_\sigma(\vec{\mu}, t)$  as a series, namely
 \begin{equation}\label{Markov_t_series}
  \hat{P}_\sigma(\vec{\mu}, t) = e^{-t}\, \sum_{n=1}^{\infty} (1-e^{-t})^{n-1} \hat{P}_\sigma(\vec{\mu}, 0)^n.
\end{equation}
This shows a connection with the discrete process (\ref{Markov_k}) as, indeed, $\hat{P}_\sigma(\vec{\mu}, 0)^n$ is the Fourier transform of $P^{(n)}$. 
\vspace{.5cm}

\noindent \emph{Example.} Suppose the initial condition $P$ is supported at one point, $p_q(0) =1$ where $q$ is prime. We have 
$\hat{P}_\sigma(\vec{\mu}, 0) = q^{-\sigma} \exp{2\pi i \mu_q}$. Formula (\ref{Markov_t_series}) assumes the simple form
\[
  \hat{P}_\sigma(\vec{\mu}, t) = e^{-t}\, \sum_{k=1}^{\infty} (1-e^{-t})^{k-1} q^{-k\sigma}\,e^{2\pi i k \mu_q} .
\] 
This means that the only nonzero weights of $P(t)$ are $p_{q^k}$, $k =1, 2, \ldots$, and
\begin{equation}\label{P_t_example}
 p_{q^k}(t) = e^{-t}\, (1-e^{-t})^{k-1}.
\end{equation}
In this (non-generic) case, the entropy can be calculated explicitly:
\[
E(P(t))
=
t-(e^t-1)\log(1-e^{-t})
\sim t+1,
\qquad t\to\infty.
\]
Since the expectation value of the distribution is $e^t$, its characteristic
scale grows exponentially, while its entropy grows logarithmically with that
scale. This is consistent with the fact that the geometric distribution
maximizes entropy among probability distributions on $\mathbb{N}$ with a
prescribed expectation value.

\section{A mathematical framework for the modelling of an infinite array of boson sites}
\label{Section_boson_gen_boson}

\subsection{The Fock space of an infinite array of boson sites}

 Unless stated otherwise, in the following, $p= 2,3, 5,7, \ldots$ denote primes and $n \in \mathbb{N}$ denote positive integers. Furthermore, $\delta_n$ denotes a unit measure on $\mathbb{N}$, supported at the single point $n$. Since in the construction that follows these measures acquire the significance of quantum states, they will also be denoted $\ket{n}$, i.e.,  $\ket{n} = \delta_n$.  Invoking the fundamental theorem of arithmetic, every integer $n$ is uniquely represented via
\[
n = \prod\limits_p p^{a_p(n)} \quad (\mbox{ the product is over all primes }).
\]
This defines the multiplicities $a_p(n) \in \mathbb{N}$. In what follows, we make frequent use of two arithmetic functions: $\Omega(n) = \sum_{p} a_p(n)$, and  $Q(n) = \sum_{p} a_p(n)^2$. 

We utilize a number-theoretic model of the bosonic Fock space, originated in \cite{Spector} and \cite{Bost_Connes}. It is equivalent to the standard one, yet its particular features enable us to bring to bear the Fourier analysis on $\mathbb{Q}_+$. The construction begins by defining the single-particle Hilbert space 
\begin{equation}\label{H_SP}
  \mathbb{H}_{\text{SP}} = \mbox{ span } \{\, \ket{p} = \delta_p :\, p \mbox{ prime } \},
\end{equation}
 i.e., $ \mathbb{H}_{\text{SP}}$ is here defined via a distinguished orthonormal basis with the basis vectors indexed by primes.  Next, the bosonic $k$-particle spaces are given by the symmetric  $k$-th tensor power of the single-particle space, i.e., 
\begin{equation}\label{H_kpart}
 \mathbb{H}_{\text{SP}}^{\odot 0} = \mathbb{C}, \mbox{ and }\quad \mathbb{H}_{\text{SP}}^{\odot k} = \mbox{ span}\{\, |n\rangle= \delta_n :\, \Omega (n) = k\}\quad  \mbox{ for }\quad k = 1, 2,3,\ldots.
\end{equation}
Of course, $ \mathbb{H}_{\text{SP}}^{\odot 1} =  \mathbb{H}_{\text{SP}}$.
Finally, the total Fock space is defined as the direct sum of all the $k$-particle spaces, i.e., 
\begin{equation}\label{H_Fock}
  \mathbb{H}^\odot = \bigoplus\limits_{k=0}^\infty \, \mathbb{H}_{\text{SP}}^{\odot k} .
\end{equation}
The following isomorphism of Hilbert spaces is evident:
\begin{equation}\label{H_Fock_is}
  \mathbb{H}^\odot = \mbox{ span}\{\, |n\rangle= \delta_n :\, n \in \mathbb{N}\}= \ell_2(\mathbb{N}).
\end{equation}
This is the reason for the number-theoretic construction, as identifying the Fock space with the space of square-summable sequences enables the type of analysis that is to follow.  
It is important to realize that, since the sites are distinguishable, the only universal symmetries of this structure are permutations, i.e. renumbering, of the sites. In other words, any automorphism should restrict to an isomorphism of each $N$-particle subspace. Moreover, states that are supported on any number of sites should map into states with the same property. 

\subsection{A semi-classical dynamics with Hoyleian creation of matter out of vacuum}

We will now speculate about possible dynamical properties of an array of boson sites from a semi-classical point of view. To this end consider an evolving mixed state 
\begin{equation}\label{rho_P}
  \rho_P(t) = \sum_{n} p_n(t) |n\rangle\langle n|.  
\end{equation}
It is a hermitian, positive semi-definite and trace one operator. It represents a mixed state of a boson system. It is devoid of off-diagonal terms, which makes it essentially equivalent to a classical probabilistic mixture of Fock states. Now, suppose the entries $p_n(t)$ evolve according to the process (\ref{Markov_t}). If $p_1(0)=1$, this is a stationary solution in which all mass resides in the vacuum state. The evolution is interesting if $p_1(0)<1$. The observations we have made in Section \ref{subsection_Markov} lead to the following conclusions
\begin{enumerate}
\item  If  $\rho_P(0)$ is supported on a finite sub-array, i.e. $p_n(0)$ can be nonzero only if $n$ is a product of primes form a finite set, then so is  $\rho_P(t)$. In other words, the site-support does not increase. It is already encoded in the initial condition.  

  \item The mixed state $\rho_P(t)$ evolves so that its mass is transferred from subspaces $\mathbb{H}_{\text{SP}}^{\odot k}$ with small $k$ to those with high $k$. In other words, there is no particle number preservation. One might say, the complexity of the system steadily and indefinitely increases. 
     
        \item
        In the generic case, i.e. $p_n(0)>0 $ for all $n$, the entropy is constantly increasing. 
         
  \item 
    It follows that the expectation of any observable that is supported in the direct sum of finitely many  $\mathbb{H}_{\text{SP}}^{\odot k}$ tends to zero as $t$ tends to infinity. 
    
    \item A measurement that is local, i.e., confined to finitely many sites, may have expectation that increases exponentially in time. As an example, consider the number operator (\ref{Np_arithmetic}) at site $q$. Its expectation at time $t$ is $\langle \hat{N}_q \rangle = \mbox{ trace } [\rho_P(t) \hat{N}_q ] = \sum_{k = 1}^\infty   p_{q^k}(t) k$. For the initial condition as that in the example resulting in (\ref{P_t_example}), this amounts to 
        \[
        \langle \hat{N}_q \rangle = e^{-t}\, \sum_{k=1}^{\infty} (1-e^{-t})^{k-1}  k 
        =  \sum_{k=1}^{\infty} \frac{d}{dt}  (1-e^{-t})^{k}
        = \frac{d}{dt} \frac{1-e^{-t}}{e^{-t}} = e^t.
        \]
        Also, as we have seen, the entropy is increasing linearly with time.

  \end{enumerate}
 The above properties of our postulated classical evolution are mathematically appealing, but ultimately appear to be physically unrealistic. In the remainder of this article, we consider a quantum version of this dynamics and investigate its properties, which prove to be physically plausible.

\subsection{The standard quantum dynamics of bosons}

At this stage, we define a family of annihilation and creation operators, $\hat{a}_p, \hat{a}_p^\dagger$, indexed by primes. 
\begin{equation}\label{on_arithmetic}
    \hat{a}_p \, |n\rangle = \sqrt{a_p(n)}\,\, |n/p\rangle ,\quad
       \hat{a}_p^\dagger \, |n\rangle = \sqrt{a_p(n)+1}\,\, |np\rangle.
\end{equation}
By convention, when $p$ is not a divisor of $n$, then $\ket{n/p} =0$.
A direct calculation shows that these operators satisfy the bosonic Canonical Commutation Relations (BCCR):
\begin{equation}\label{bCCR}
[\,  \hat{a}_p, \hat{a}_q \, ] = 0, \quad [\,  \hat{a}_p, \hat{a}_q^\dagger \, ] = \delta_{p,q}\quad \mbox{ for all primes } p,q.
\end{equation}
The site number operators are $\hat{N}_p = \hat{a}_p^\dagger \hat{a}_p$, and the total number operator is $\hat{N} = \sum_p \hat{N}_p$. It follows from (\ref{on_arithmetic}) that
\begin{equation}\label{Np_arithmetic}
    \hat{N}_p \, |n\rangle = a_p(n)\, |n\rangle , \quad \mbox{ and }\quad  \hat{N} \, |n\rangle = \Omega(n)\, |n\rangle .
\end{equation}
Note that $\ket{n} \in \mathbb{H}^{\odot \Omega(n)}$, i.e., $\Omega(n) =\sum_p a_p(n)$ signifies the number of particles in the state $\ket{n}$.

This framework lends itself to the modelling of an array of boson sites. Namely, consider an infinite sequence of sites, which host boson particles. The sites are indexed by consecutive primes, $2, 3, 5, 7,$ etc., rather than consecutive integers. The operator $\hat{a}_p^\dagger$ creates a particle at site $p$ and $\hat{a}_p$ annihilates a particle at site $p$. When the quantum system is in the state $\psi = \sum_{n=1}^{\infty} \, z_n \, \ket{n}$ with $\sum_{n} \, |z_n|^2 =1$, then the expected number of particles at site $p$ is $\langle \psi | \hat{N}_p \ket{\psi} =\sum_{n} a_p(n)\, |z_n|^2$. 

Note that a multi-particle system of bosons on an infinite array of sites is distinct from a boson gas. In both these systems the particles are indistinguishable from one another. However, in an array, the sites are distinguishable from one another.  
The  dynamic of such a system is typically modeled by the Bose-Hubbard Hamiltonian (BHH), or its variants. The typical BHH consists of two terms, i.e., $\mathcal{H}= \mathcal{H}_o + \mathcal{H}_h $ . The first term expresses the effect of single site energies, stemming from the Coulomb potential $U$ and the chemical potential $\mu$. It has the form 
\begin{equation}\label{NT_BH_diag}
  \mathcal{H}_o= \sum_{p} \, \frac{U}{2}\, \hat{N}_{p}(\hat{N}_{p} - 1) - \mu\, \hat{N}_{p} ,
\end{equation}
where sum is over all primes, i.e., over all sites. It acts  diagonally, i.e.
\begin{equation}\label{H0_psi}
   \mathcal{H}_o\, \sum_n z_n \ket n= \sum_n \left[\frac{U}{2} Q(n) - \left( \frac{U}{2} + \mu\right)\Omega(n)\right] \, z_n \ket n , \quad \mbox{ where } \quad Q(n) = \sum_p a_p(n)^2.
   \end{equation}
The second part is the hopping part, which assumes the form
\begin{equation}\label{NT_BH_hopp}
  \mathcal{H}_h= - t\,\sum_{\overline{pq}} (\hat{a}_{p}^\dagger \hat{a}_{q} + \hat{a}_{q}^\dagger \hat{a}_{p}).
\end{equation}
Here, the sum is over pairs of sites $\overline{pq}$. The pairing can be chosen to occur between nearest neighbours, i.e., site $2$ would be only paired to site $3$, site $3$ to both sites $2$ and $5$, site $5$ to sites $3$ and $7$, etc. However, other types of pairing can also be considered, depending on the physical characteristics of the system to be modelled. The parameter $t>0$ is the strength of hopping. The minus sign is the typical choice. The purpose of this article is to consider an alternative interaction part in place of $\mathcal{H}_h$, while retaining $\mathcal{H}_o$. 
\vspace{.5cm}

\noindent 
\emph{Remark.} The operators $\hat{a}_p$ and $\hat{a}_p^\dagger$ are represented in the standard basis by matrices with a specific ray structure. Namely, the matrix of $\hat{a}_p$ has zeros everywhere, except the ray of slope $1/p$. The matrix of $\hat{a}_p^\dagger$, is also a ray with the slope $p$,  e.g.,
\[
\hat{a}_2^\dagger =
 \left[\begin{array}{cccccccccc}
 &\cdot &\cdot &\cdot &\cdot &\cdot &\cdot &\cdot &\cdot &\ldots\\
  &1  &\cdot &\cdot &\cdot &\cdot &\cdot &\cdot &\cdot &\ldots\\
  &\cdot &\cdot &\cdot &\cdot &\cdot &\cdot &\cdot &\cdot &\ldots\\
  &\cdot & \sqrt{2} &\cdot &\cdot &\cdot &\cdot &\cdot &\cdot &\ldots\\
  &\cdot &\cdot &\cdot &\cdot &\cdot &\cdot &\cdot &\cdot &\ldots\\
  &\cdot &\cdot & 1 &\cdot &\cdot &\cdot &\cdot &\cdot &\ldots\\
  &\cdot &\cdot &\cdot &\cdot &\cdot &\cdot &\cdot &\cdot &\ldots\\
  &\cdot &\cdot &\cdot & \sqrt{3}  &\cdot &\cdot &\cdot &\cdot&\ldots\\
 &\vdots &\vdots &\vdots &\vdots &\vdots  &\vdots &\vdots &\vdots &\ddots
 \end{array}\right].
\]
Such matrices can be described by the slope of the nontrivial ray, in general, a rational number $w \in \mathbb{Q}_+$. It is easily seen that a product of two such matrices, again, has a ray structure, whose slope is the product of the slopes of the two terms. This signals the relevance of the multiplicative group of fractions and, by extension, of its dual.  

\subsection{Generalized bosons}
\label{Subsection_gen_boson}

We will also use generalized bosonic creation and annihilation operators, defined via
\begin{equation}\label{bs}
    \hat{b}_p \, |n\rangle = |n/p\rangle ,\quad
       \hat{b}_p^\dagger \, |n\rangle = |np\rangle,
\end{equation}
with the convention that $|n/p\rangle$ is replaced by $0$ in those cases when $p$ is not a divisor of $n$.  
Here, again, $p$ is any prime. The following relations are verified via direct calculations:
\begin{equation}\label{bCCR_almost}
[\,  \hat{b}_p, \hat{b}_q \, ] = 0, \quad [\,  \hat{b}_p, \hat{b}_q^\dagger \, ] = 0 \mbox{ if } p \neq q,
\end{equation}
as well as,
\begin{equation}\label{bCCR_almost2}
[\,  \hat{b}_p, \hat{b}_p^\dagger \, ] = \pi_p, \quad \mbox{ where }\,\, \pi_p \, |n\rangle = \left\{\begin{array}{ll}
                                                          0 & \mbox{ if } p \mbox{ is a divisor of } n \\
                                                           \, |n\rangle & \mbox{ otherwise } 
                                                        \end{array}\right.
\end{equation}
Note that $\pi_p$ is not an identity, but a projection onto the subspace $\mbox{ span } \{|n\rangle: p \mbox{ is not a divisor of } n\}$.  
The following lemma will play a role in what follows:
\begin{lemma}\label{Commut_X}
For a vector $x = \sum_{p} x_p |p\rangle \in \mathbb{H}_{\text{SP}}$, define operator $X = \sum_p x_p  \hat{b}_p$. Then
\[
[\,  X, X^\dagger \, ] = \sum_{p}|x_p|^2\, \pi_p,\quad \mbox{ and } \sum_{p}|x_p|^2\, \pi_p\, |n\rangle = 
\left( \sum_{p}|x_p|^2\,0^{a_p(n)}\right)\, |n\rangle.  
\]  
\end{lemma}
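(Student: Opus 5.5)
We expand the commutator bilinearly and then use the relations (\ref{bCCR_almost}) and (\ref{bCCR_almost2}). Since $X^\dagger = \sum_q \overline{x_q}\, \hat{b}_q^\dagger$, we have
\[
[\, X, X^\dagger\,] = \sum_{p,q} x_p \overline{x_q}\, [\, \hat{b}_p, \hat{b}_q^\dagger\,].
\]
By (\ref{bCCR_almost}), every off-diagonal term $p\neq q$ vanishes. By (\ref{bCCR_almost2}), each diagonal term equals $|x_p|^2 \pi_p$. This gives the first identity.

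The second identity follows by applying $\pi_p$ to a basis vector $\ket n$. By definition, $\pi_p\ket n = \ket n$ exactly when $a_p(n)=0$, and $\pi_p\ket n = 0$ when $a_p(n)\geq 1$. With the convention $0^0=1$, this is $\pi_p \ket n = 0^{a_p(n)}\ket n$. Summing over $p$ with weights $|x_p|^2$ gives the stated diagonal formula. In particular, $\sum_p |x_p|^2 \pi_p$ is diagonal in the basis $\{\ket n\}$, with eigenvalues bounded by $\|x\|^2$.

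The only step needing care, and the one I expect to be the main (mild) obstacle, is justifying the exchange of the infinite sums with the commutator when $x$ has infinitely many nonzero coordinates. First I would check that $X$ is a bounded operator, so that all products are defined on all of $\ell_2(\mathbb{N})$. The operators $\hat{b}_p$ are partial isometries with $\hat{b}_p\hat{b}_p^\dagger = I$ and $\hat{b}_p^\dagger \hat{b}_p = I-\pi_p$. The ranges of $\hat{b}_p^\dagger$ for distinct $p$ are not orthogonal, so the triangle inequality $\|X\|\le\sum_p|x_p|$ is not sufficient when $x$ is merely in $\ell_2$.

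Instead, I would compute directly: $X^\dagger\ket n=\sum_q \overline{x_q}\ket{nq}$, and these vectors are orthogonal for distinct $q$, so $\|X^\dagger \ket n\|=\|x\|$. For a general $\psi = \sum_n z_n \ket n$, I would bound $\|X^\dagger\psi\|^2$ by grouping the terms according to $m=nq$. Alternatively, to avoid this estimate, I would verify both identities on the finite-support truncations $x^{(N)}$ of $x$, where all sums are finite. I would then verify the general case by computing matrix elements $\bra m [X,X^\dagger]\ket n$ directly. Each such matrix element involves only finite sums over the divisors of $m$ and $n$. The result matches $\delta_{mn}\sum_p |x_p|^2 0^{a_p(n)}$, which settles the identity entrywise.
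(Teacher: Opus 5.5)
Your algebraic core is exactly the direct calculation the paper invokes, and it is correct: you expand $[X,X^\dagger]=\sum_{p,q}x_p\overline{x_q}\,[\hat b_p,\hat b_q^\dagger]$, discard $p\neq q$ by (\ref{bCCR_almost}), use (\ref{bCCR_almost2}) on the diagonal, and read off $\pi_p\ket n=0^{a_p(n)}\ket n$. The step that fails is the analytic one you planned first. For general $x\in\mathbb{H}_{\text{SP}}$ the operator $X$ is \emph{not} bounded, so no estimate of $\|X^\dagger\psi\|^2$ obtained by grouping terms with $m=nq$ can bound it in terms of $\|x\|$. In fact $\|X\|=\|X^\dagger\|=\sum_p|x_p|$. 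The upper bound is the triangle inequality. For the lower bound, take $x_{p_j}=K^{-1/2}$ on $K$ distinct primes (so $\|x\|=1$) and the normalized box state $\psi_L=L^{-K/2}\sum_{0\le k_j<L}\ket{p_1^{k_1}\cdots p_K^{k_K}}$. Each basis vector whose exponents all lie in $[1,L-1]$ receives $K$ equal contributions in $X^\dagger\psi_L$, so $\|X^\dagger\psi_L\|^2\ge K(1-1/L)^K\to K$. (On the Fourier side, $X^\dagger$ is multiplication by $\sum_p\overline{x_p}\,e^{2\pi i\mu_p}$ on $H_2(\mathbb{T}^\omega)$; since the $\mu_p$ are independent, its essential supremum is $\sum_p|x_p|$.) Hence for $x\in\ell_2\setminus\ell_1$, e.g.\ $x_{p_j}=1/j$, both $X$ and $X^\dagger$ are unbounded, and $XX^\dagger$, $X^\dagger X$ are not defined on all of $\ell_2(\mathbb{N})$.

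Your fallback is the right way to make the lemma rigorous. On finitely supported vectors everything is defined: $X^\dagger\ket n=\sum_q\overline{x_q}\ket{nq}\in\ell_2$ lies in the domain of the matrix operator $X$. Moreover, $\bra{m}XX^\dagger\ket{n}-\bra{m}X^\dagger X\ket{n}=\delta_{mn}\sum_{p\nmid n}|x_p|^2$. For $m\neq n$ the only possible term, $x_{p'}\overline{x_p}$ with $m/p=n/p'$, appears in both products and cancels. One small correction: the diagonal entry of $XX^\dagger$ is $\sum_q|x_q|^2$, a convergent infinite sum, not a finite sum over divisors. So drop the boundedness claim and state the first identity on the dense subspace of finitely supported vectors (equivalently, entrywise). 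The commutator then extends to the bounded diagonal operator $\sum_p|x_p|^2\pi_p$, of norm at most $\|x\|^2$. If you want to work with bounded operators throughout, you must add the hypothesis $\sum_p|x_p|<\infty$.
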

\begin{proof}
  The identity is seen via a direction calculation involving (\ref{bCCR_almost}) and (\ref{bCCR_almost2}). (Recall, $0^0 =1$.)
\end{proof}

It is also convenient to define operators $\hat{b}_n$ for all $n\in \mathbb{N}$. Namely, we set:
\begin{equation}\label{bns}
 \hat{b}_1 = 1,\mbox{ and }\, \hat{b}_n = \prod\limits_p (\hat{b}_p)^{a_p(n)} \, \mbox{ for all }  n >1.
\end{equation}
Since all $\hat{b}_p$ commute, $\hat{b}_n$ are well defined. The following identity is an immediate consequence of the definition
\begin{equation}\label{bkan}
 \hat{b}_m \hat{b}_n = \hat{b}_{mn} \, \mbox{ for all }  m,n.
\end{equation}
Also, it is seen directly that
\begin{equation}\label{bn_explic}
    \hat{b}_n \, |k\rangle = |k/n\rangle ,\quad
       \hat{b}_n^\dagger \, |k\rangle = |kn\rangle.
\end{equation}
Thus, the operators $\hat{b}_n$ are given in the standard basis by a matrix that has zeros everywhere, except the ray of slope $1/n$, which is filled with the constant coefficients $1$. The adjoint of an operator of this type is given by the transpose of its matrix,  e.g.,
\[
\hat{b}_2^\dagger =
 \left[\begin{array}{cccccccccc}
 &\cdot &\cdot &\cdot &\cdot &\cdot &\cdot &\cdot &\cdot &\ldots\\
  &1  &\cdot &\cdot &\cdot &\cdot &\cdot &\cdot &\cdot &\ldots\\
  &\cdot &\cdot &\cdot &\cdot &\cdot &\cdot &\cdot &\cdot &\ldots\\
  &\cdot & 1 &\cdot &\cdot &\cdot &\cdot &\cdot &\cdot &\ldots\\
  &\cdot &\cdot &\cdot &\cdot &\cdot &\cdot &\cdot &\cdot &\ldots\\
  &\cdot &\cdot & 1 &\cdot &\cdot &\cdot &\cdot &\cdot &\ldots\\
  &\cdot &\cdot &\cdot &\cdot &\cdot &\cdot &\cdot &\cdot &\ldots\\
  &\cdot &\cdot &\cdot & 1  &\cdot &\cdot &\cdot &\cdot&\ldots\\
 &\vdots &\vdots &\vdots &\vdots &\vdots  &\vdots &\vdots &\vdots &\ddots
 \end{array}\right].
\]
It is easily seen that all these operators are bounded, and their operator norms satisfy
 \begin{equation}\label{bn_norms}
   \| \hat{b}_n \| = 
      \| \hat{b}_n^\dagger \| = 1 \quad \mbox{ for all } n \in \mathbb{N}.
\end{equation}
The corresponding number operators, $\hat{M}_p = \hat{b}_p^\dagger \hat{b}_p$, and $\hat{M} = \sum_{p} \hat{M}_p$ are characterized as follows:
\begin{equation}\label{M_number}
\hat{M}_p \, |n\rangle = \left\{\begin{array}{ll}
                                                          \ket{n} & \mbox{ if } p \mbox{ is a divisor of } n \\
                                                           \, 0 & \mbox{ otherwise } 
                                                        \end{array}\right. , \quad \mbox{ and } \quad 
                                                        \hat{M} \ket{n} = \omega(n) \, \ket{n},                                                    
\end{equation} 
where $\omega(n)$ is the number of distinct prime divisors of $n$. Note that $\omega(n)$ signifies the number of sites in the support of state $\ket{n}$. It is interesting to compare (\ref{M_number}) with (\ref{Np_arithmetic}). Note also that for a state $\ket{\psi} = \sum_n x_n \ket{n}$, we have 
\begin{equation}\label{N_M_expect}
  \langle \psi | \hat{N} |\psi\rangle = \sum_{n} \Omega(n)\,|x_n|^2, \quad 
  \langle \psi | \hat{M} |\psi\rangle = \sum_{n} \omega(n)\,|x_n|^2.
\end{equation}
Thus, while $\hat{N}$ measures the expected number of particles, $\hat{M}$ measures the expected number of occupied sites.
Both sequences are characterized statistically by the Erd\H{o}s--Kac theorem; see \cite{Erdos-Kac} for the original work, \cite{Renyi-Turan} for a simplified proof with stronger error estimates, and \cite{Tenenbaum} for a textbook exposition. The theorem asserts that for $n$ chosen uniformly from
$\{1,\dots,N\}$,
\[
\frac{\omega(n)-\log\log N}{\sqrt{\log\log N}}
\Longrightarrow \mathcal N(0,1) \quad \mbox{ in the limit } N \rightarrow \infty,
\]
and the same statement holds with $\Omega(n)$ in place of $\omega(n)$.
In particular, both additive functions have asymptotic mean and variance
$\sim \log\log N$. While the proof of the theorem requires deep arguments, the fact that $\omega$ and $\Omega$ share the same limiting behavior can be seen via relatively simple arguments.

Replacing the bosonic number operators $\hat{N}_p$ with the generalized bosonic number operators $\hat{M}_p$ in (\ref{NT_BH_diag}), and observing that $\hat{M}_p^2=\hat{M}_p$, one obtains the reduced Hamiltonian for generalized bosons
\begin{equation}\label{MT_BH_diag}
  \mathcal{H}^b_o= - \mu\, \sum_p \hat{M}_{p}, \quad \mbox{ so that } \quad \mathcal{H}^b_o\, \sum_n z_n \ket n =  - \mu \sum_n \omega(n)\, z_n \ket n .
\end{equation}
Since this reduction eliminates the Coulomb interaction term entirely, one may choose instead to retain the original diagonal Bose--Hubbard term, depending on the physical regime under consideration.

\subsection{The gen-boson transform}
The model which is proposed in Section \ref{Section_New_Models} is based on the stimulated creation operator. Namely, for the state $\ket{\psi}$ we consider a spontaneously arising creation operator: 
\begin{equation}\label{Gen_boson_transform}
  \ket\psi = \sum_{n} x_n \ket n \mapsto \Bgd{\psi} = \sum_{n} x_n \hat{b}_n^\dagger \quad \mbox{ equiv. }\quad  \Bgd{\psi} \mapsto \ket \psi = \Bgd{\psi} \ket 1 .
\end{equation}
The map assigning $\Bgd{\psi}$ to $\ket\psi$ will be refered to as the gen-boson transform. Note that $\Bgd{\psi}$ can act on the state itself to induce an amplified state
\begin{equation}\label{psi_self_amp}
\Bgd{\psi}\ket\psi =  \sum_{n} x_n \, \hat{b}_n^\dagger \ket{\psi} = \sum_n\sum_m x_n x_m \ket{mn} = \sum_r \left(\sum_{d|r} x_d x_{r/d}\right)\ket{r}.
\end{equation} 
Note that the expression in the parenthesis is precisely the $r$-th term of the Dirichlet convolution of the sequence $x_n$ with itself. Furthermore,   $\Bgd{\psi}\ket\psi$ may be equivalently represented as a symmetric tensor square of $\ket{\psi}$, i.e.,
\begin{equation}\label{psi_square}
 \Bgd{\psi}\ket\psi = \ket{\psi} \odot \ket{\psi}.
\end{equation}
The gen-boson transform admits a consistent extension to the case of mixed states, see Subsection \ref{Dynamics_for_rho}. 

\subsection{Examples of action of $\Bg{\psi}\Bgd{\psi}$ and\ $\Bgd{\psi}\Bg{\psi}$ in fixed particle-number subspaces}

Let
$
|\psi\rangle = \sum_{\Omega(n)=N} x_n\,\ket n,
$
be a state with fixed particle number $N$, i.e. $\ket\psi \in \mathbb{H}_{\text{SP}}^{\odot N} $. 
Note that
\[
 \Bg{\psi}  |\psi\rangle
=
\sum_{\Omega(n)=N} x_n^*\, b_n
\sum_{\Omega(m)=N} x_m\,|m\rangle.
\]
Since $\Omega(n)=\Omega(m)=N$, the condition $n \mid m$ implies $n=m$, and hence
\[
b_n |m\rangle =
\begin{cases}
|1\rangle, & n=m,\\
0, & n\neq m.
\end{cases}
\]
Therefore
\[
 \Bg{\psi} |\psi\rangle
=
\sum_{\Omega(n)=N} |x_n|^2\, \ket 1
=
\ket 1,
\]
and consequently
\begin{equation}\label{B_triv_order}
    \Bgd{\psi} \Bg{\psi} |\psi\rangle = |\psi\rangle.
 \end{equation}
A  more interesting outcome results from reversing the order of nonlinear operations. On one hand, we have
\begin{equation}\label{BBpsi_part_numer}
  \mbox{If } \quad \ket \psi \in \mathbb{H}_{\text{SP}}^{\odot N}, \quad \mbox{ then } \quad  \Bg{\psi} \Bgd{\psi} \ket\psi\in \mathbb{H}_{\text{SP}}^{\odot N}.  
\end{equation}
(A direct argument for this identity is given in Subsection \ref{subsection_matrices}.) On the other hand, this operation is nontrivial and will introduce entanglement in some scenarios. We illustrate this with an example, as follows. Let
\[
|\psi\rangle \sim (|p\rangle + |q\rangle)\odot(|p\rangle + |q\rangle)
= |p^2\rangle + 2|pq\rangle + |q^2\rangle,
\]
so that
\[
 \Bgd{\psi} \sim b_{p^2}^\dagger + 2 b_{pq}^\dagger + b_{q^2}^\dagger = \left( b_{p}^\dagger +  b_{q}^\dagger \right)^2.
\]
A direct computation gives
\[
 \Bg{\psi} \Bgd{\psi} |\psi\rangle
\sim 
15|p^2\rangle + 20|pq\rangle + 15|q^2\rangle,
\]
i.e. operator $ \Bg{\psi} \Bgd{\psi}$ does not preserve the product structure. 
Indeed, an unentangled state has the form
$
A|p^2\rangle + B|pq\rangle + C|q^2\rangle$, 
with 
$B^2 = 4AC$.
The initial state $\ket\psi$ satisfies this condition, but the state $ \Bg{\psi} \Bgd{\psi} \ket\psi$ does not. In summary, within fixed particle-number subspaces, the nonlinear operation
\begin{equation}\label{BBpsi}
  \ket\psi \mapsto \Bg{\psi} \Bgd{\psi} \ket\psi, 
  \qquad \text{where } \ket\psi = \Bgd{\psi} \ket 1,
\end{equation}
can generically produce entanglement. However, this mechanism disappears in the completely multiplicative case. Indeed, if the state $\ket{\psi}$ factorizes across prime sectors as
\[
\ket{\psi}=\bigodot_p \ket{\psi_p},
\]
then the associated operators $\Bg{\psi}$ and $\Bgd{\psi}$ inherit the same Euler-product structure. Consequently,
\[
\Bg{\psi}\Bgd{\psi}\ket{\psi}
=
\bigodot_p
\left(
\Bg{\psi_p}\Bgd{\psi_p}\ket{\psi_p}
\right),
\]
so the evolution preserves separability between distinct prime sectors. In particular, no entanglement is generated between different primes in the completely multiplicative case.

Now consider a single prime sector
\[
\ket{\psi_p}=\sum_{k\ge0} z_k \ket{p^k}.
\]
Then
\[
\Bg{\psi_p}\Bgd{\psi_p}\ket{\psi_p}
=
\sum_{k\ge0}
\left(
\sum_{\substack{m,n,r\ge0\\ n+r-m=k}}
\overline{z_m} z_n z_r
\right)
\ket{p^k}.
\]
Under the Fourier identification
\[
\ket{p^k}\longleftrightarrow e^{2\pi i k\theta},
\qquad
\psi_p(\theta)=\sum_{k\ge0} z_k e^{2\pi i k\theta},
\]
this nonlinear operation becomes
\[
\Bg{\psi_p}\Bgd{\psi_p}\ket{\psi_p}
\quad \longleftrightarrow \quad
P_+\bigl(|\psi_p|^2\psi_p\bigr),
\]
where $P_+$ denotes projection onto the nonnegative Fourier modes. Thus, within each prime sector, the nonlinear dynamics reduces to the classical cubic Szeg\H{o}-type nonlinearity, \cite{GerardGrellierSzego}.

\section{The nonlinear models}
\label{Section_New_Models}

We introduce the Schr\"{o}dinger-Dirichlet equation (SDE), i.e., a nonlinear and nonlocal extension of the Schr\"{o}dinger equation. Its crucial term is of the form $\Bg{\psi} \Bgd{\psi} \ket\psi$ (in the notation of the previous section).

\subsection{The SDE for a pure state}\label{Subsection_SDE}

Denoting the time variable by $\tau$, we postulate SDE in the form
\begin{equation}\label{SDE_main}
  i\hslash \, \frac{\partial}{\partial\tau} \ket \psi = \mathcal{H}_o \, \ket\psi - t\,  \Bg{\psi} \Bgd{\psi} \ket\psi.
\end{equation}
As per discussion in Section \ref{Subsection_gen_boson}, it may be suitable in some modelling instances to replace $\mathcal{H}_o$ by $\mathcal{H}^b_o$. We interpret parameter $t$ as the strength of nonlocal interaction, Also the sign of $t$ may be chosen to be $\pm$ depending on the physical context. Note that potentially replacing the cubic term in (\ref{SDE_main}) with its alternative $ \Bgd{\psi} \Bg{\psi} \ket\psi$ would be of little interest because of the property (\ref{B_triv_order}). 

It is interesting to explore the Fourier-dual of (\ref{SDE_main}), via an application of harmonic analysis on the group of positive rationals, see Subsection \ref{Appendix_FT_Q_plus}. In the dual setting, the state $\ket \psi$ is represented via 
\begin{equation}\label{thePsi}
  \Psi(\vec{\mu}) = \sum_{n} x_n e^{2\pi i \vec{n}\cdot \vec{\mu}} \, \in\, H_2(\mathbb{T}^\omega, d\vec{\mu}).
\end{equation}
A direct calculation shows that 
\begin{equation}
\mathcal{H}_o\Psi  = \left(\sum_{p}\, \frac{U}{2} \left( \frac{1}{2\pi i}\frac{\partial }{\partial \mu_p}\right)^2 - \left( \frac{U}{2} + \mu\right) \frac{1}{2\pi i}\frac{\partial}{\partial \mu_p}\right)\, \Psi .
\end{equation}
Formally, $\mathcal{H}_o$ has the appearance of a differential operator. However, there is no differential structure on $\mathbb{T}^\omega$, and so the object is interpreted via (\ref{H0_psi}). Furthermore, the nonlinear operation 
(\ref{BBpsi}) is expressed in the dual form 
\begin{equation}\label{BBpsiFour}
  \Psi \mapsto  P_+ \left(|\Psi|^2\Psi \right).
\end{equation}
With this understood, the SDE equation (\ref{SDE_main}) is equivalent to 
\begin{equation}\label{SDE_main_Four}
 i \hslash \frac{\partial}{\partial\tau} \Psi =  \mathcal{H}_o \Psi - t\, P_+ \left(|\Psi|^2\Psi \right).
\end{equation}
Note that $|\Psi|^2\Psi$ need not be in $H_2(\mathbb{T}^\omega)$, even if $\Psi$ is. Thus, in order to retain the physical interpretation one needs to apply the Wiener-Hopf-Toeplitz type projection $P_+$. The nonlinear term $P_+ \left(|\Psi|^2\Psi \right)$ is well defined provided $|\Psi|^2\Psi \in L_2(\mathbb{T}^\omega)$. 
\vspace{.5cm}

\noindent
\emph{Example.} The SDE reduces to systems of ordinary differential equations of Hamiltonian type whenever a finitely supported subspace is specified as an Ansatz. To give an example, one can look for solutions in the form:
\[
\ket\psi = x(\tau)\ket{p^2} + y(\tau)\ket{pq} + z(\tau)\ket{q^2}
\]
Note that $\Omega(p^2) =\Omega(pq) = \Omega(q^2) = 2$. However, in contrast, $Q(p^2) = Q(q^2) = 4$, while $Q(pq) =2$. The SDE (\ref{SDE_main}) reduces to the system of Hamiltonian ODEs: 
\begin{equation}\label{time-three}
  \left\{\begin{array}{cccc}
    i\,\dot{x} = & (U - 2\mu)\,x & - & t\left[ (|x|^2 + 2|y|^2 + 2|z|^2)\,x  +z^* y^2\right]\\
    \\
    i\,\dot{y}  = & -2\mu \,y & - & t\left[ (2|x|^2 + |y|^2 + 2|z|^2)\,y +2 xy^*z \right]\\
    \\
    i\,\dot{z} = & (U - 2\mu)\,z & - & t\left[(2|x|^2 + 2|y|^2 + |z|^2)\,z +x^*y^2\right]
  \end{array}\right.
\end{equation}
To see the Hamiltonian property in an elementary fashion, one can verify that this system is equivalent to 
\begin{eqnarray}
 \nonumber 
  \frac{1}{2} i\, \dot{x} &=& \frac{\partial \mathcal{E}}{\partial x^*} \\
  \nonumber
 \frac{1}{2} i\, \dot{y} &=& \frac{\partial \mathcal{E}}{\partial y^*} \\
 \nonumber
  \frac{1}{2} i\, \dot{z} &=& \frac{\partial \mathcal{E}}{\partial z^*}
\end{eqnarray}
with the Hamiltonian function (energy functional):
\[
\mathcal{E} (x,y,z) = \left(\frac{U}{2} - \mu\right)\,\left(|x|^2 +|z|^2\right) - \mu\, |y|^2 - t\left[\frac{1}{4}\left( |x|^4 + |y|^4 + |z|^4 \right) + |x|^2|y|^2 + |y|^2|z|^2 + |z|^2|x|^2   + \Re \left( x^*y^2z^*\right)\right].
\]

\subsection{Extending the SDE to mixed states, no-signalling, and amplification of coherences}\label{Dynamics_for_rho}

In the previous subsection we introduced the gen-boson transform, defined via (\ref{Gen_boson_transform}), which assigns to a Fock state $\ket \psi$ a superposition of generalized-boson creation operators $X$. The nonlinear term $\Bg{\psi}\Bgd{\psi}$ then acts on $\ket \psi$. We now extend this construction to mixed states by defining the mixed-state gen-boson transform
\begin{equation}\label{Gene_boson_transform_rho}
 \rho = \sum_{m,n} \rho_{m,n}\, |m\rangle \langle n|
 \quad \mapsto \quad
 \Br{\rho}
 =
 \sum_{m,n} \rho_{m,n} \, \hat{b}_m \, \hat{b}_n^\dagger.
\end{equation}
In particular, if
$
\rho_\psi = |\psi\rangle\langle\psi|,
$
then
$
\Br{\rho_\psi}
=
\Bg{\psi}\Bgd{\psi}.
$
In this way, the evolution of a pure state given by (\ref{SDE_main_Four}) lifts naturally to the evolution of a density matrix:
\begin{equation}\label{Boson-Schrod-Mixed}
 i \hslash \dot{\rho}
 =
 \left[ \mathcal{H}_o  - t \Br{\rho}, \, \rho \right] .
\end{equation}
Indeed, it can be checked directly that if $\ket \psi$ evolves according to (\ref{SDE_main_Four}), then the projection operator $|\psi\rangle \langle \psi|$ satisfies (\ref{Boson-Schrod-Mixed}).

The nonlinear term $\left[\Br{\rho}, \, \rho \right]$ acts only locally in the following sense. Assume that the site corresponding to a prime $p$ is absent from the support of $\rho$, i.e.
\[
\rho_{m,n} \neq 0
\quad \Longrightarrow \quad
p \nmid m,n.
\]
Then $\Br{\rho}$ contains no terms involving $\hat b_p$ or $\hat b_p^\dagger$, and consequently the commutator $\left[\Br{\rho}, \, \rho \right]$ cannot generate terms involving the site $p$. Thus modifying $\rho$ locally at some sites has no effect on the dynamics at other sites. This is a \emph{no-signalling} property of (\ref{Boson-Schrod-Mixed}) and, \emph{a fortiori}, of (\ref{SDE_main_Four}) (equivalently (\ref{Boson-Schrod-discrete})).

\vspace{.2cm}

\noindent
\emph{Example.}
Consider a specific density matrix and its transform:
\[
\rho
=
\frac12\Big(
\ket{p}\bra{p}
+
\ket{p^2}\bra{p^2}
+
\ket{p}\bra{p^2}
+
\ket{p^2}\bra{p}
\Big) \quad
\mapsto \quad
\Br{\rho}
=
I + \frac12\left(\hat b_p^\dagger + \hat b_p\right),
\]
where we have applied $\hat b_n \hat b_n^\dagger = I$. 
A direct computation gives
\[
\left[\Br{\rho}, \, \rho \right]
=
\frac14\Big(
\ket{p^3}\bra{p}
-
\ket{p}\bra{p^3}
+
\ket{1}\bra{p^2}
-
\ket{p^2}\bra{1}
+
\ket{1}\bra{p}
-
\ket{p}\bra{1}
+
\ket{p^3}\bra{p^2}
-
\ket{p^2}\bra{p^3}
\Big).
\]
This example illustrates how the evolution (\ref{Boson-Schrod-Mixed}) generates new local coherences from pre-existing ones, while remaining confined to the same prime sector.

\subsection{SDE as a matrix equation}\label{subsection_matrices}

In this subsection we express SDE as an infinite matrix equation, which helps in finding examples of solutions. 
To fix notation, let $\Psi$ be as in (\ref{thePsi}), and let in addition
\begin{equation}\label{thePhi}
  \Phi(\vec{\mu}) = \sum_{n\in \mathbb{N}} y_n e^{2\pi i \vec{n}\cdot \vec{\mu}}
\end{equation}
Furthermore, to the vector column $x = [x_1, x_2, x_3, \ldots ]^T$, we assign the matrix $X$ of $\Bgd{x} = \sum_{n} x_n \, \hat{b}_n^\dagger$ in the standard basis:
\begin{equation}\label{theX}
X =
 \left[\begin{array}{cccccccccc}
 &x_1 &\cdot &\cdot &\cdot &\cdot &\cdot &\cdot &\cdot &\ldots\\
  &x_2  &x_1 &\cdot &\cdot &\cdot &\cdot &\cdot &\cdot &\ldots\\
  &x_3 &\cdot &x_1 &\cdot &\cdot &\cdot &\cdot &\cdot &\ldots\\
  &x_4 &x_2 &\cdot &x_1 &\cdot &\cdot &\cdot &\cdot &\ldots\\
  &x_5 &\cdot &\cdot &\cdot &x_1 &\cdot &\cdot &\cdot &\ldots\\
  &x_6 &x_3 &x_2 &\cdot &\cdot &x_1 &\cdot &\cdot &\ldots\\
  &x_7 &\cdot &\cdot &\cdot &\cdot &\cdot &x_1 &\cdot &\ldots\\
  &x_8 &x_4 &\cdot &x_2  &\cdot &\cdot &\cdot &x_1&\ldots\\
 &\vdots &\vdots &\vdots &\vdots &\vdots  &\vdots &\vdots &\vdots &\ddots
 \end{array}\right] = \sum_{n=1}^{\infty} x_n \, \hat{b}_n^\dagger,
\end{equation}
where $\hat{b}_n^\dagger$ are as in (\ref{bns}).  Similarly, to the vector column $y = [y_1, y_2, y_3, \ldots ]^T$, we assign the infinite matrix $Y = \sum_{n} y_n \, \hat{b}_n^\dagger$, etc. The following two identities are obtained by direct calculation:
\begin{equation}\label{phipsi}
  \Phi\Psi = \sum_{n\in \mathbb{N}} z_n e^{2\pi i \vec{n}\cdot \vec{\mu}},\quad \mbox{ where } z = Y x,
\end{equation}
\begin{equation}\label{phibarpsi}
  P_+\left(\Phi^*\Psi\right) = \sum_{n\in \mathbb{N}} w_n e^{2\pi i \vec{n}\cdot \vec{\mu}},\quad \mbox{ where } w = Y^\dagger x.
\end{equation}
\begin{lemma} \label{Lem_part_num}
Retaining the notation as above, assume $\Phi \in \mathbb{H}^{\odot k}$ and $\Psi \in \mathbb{H}^{\odot l}$. Then 
\begin{enumerate}
  \item $\Phi\Psi  \in \mathbb{H}^{\odot k+l}$.
  \item $P_+\left(\Phi^*\Psi\right)  \in \mathbb{H}^{\odot l-k}$.
\end{enumerate}
\end{lemma}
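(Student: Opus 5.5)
The plan is to work entirely on the coefficient side, using the identities (\ref{phipsi}) and (\ref{phibarpsi}), and to exploit that $\Omega$ is completely additive, $\Omega(mn)=\Omega(m)+\Omega(n)$. By (\ref{H_kpart}), the hypothesis means $y_m \neq 0$ only if $\Omega(m)=k$, and $x_n\neq 0$ only if $\Omega(n)=l$. The task is therefore to locate the indices $r$ for which the coefficients $z_r$, respectively $w_r$, can be nonzero.

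For part 1, I will write $z = Yx$ explicitly using the matrix (\ref{theX}) (with $y$ in place of $x$), or equivalently via (\ref{bn_explic}):
\[
z = \sum_m y_m \hat{b}_m^\dagger \sum_n x_n \ket{n} = \sum_{m,n} y_m x_n \ket{mn}.
\]
Thus $z_r = \sum_{d\mid r} y_d\, x_{r/d}$, which is the Dirichlet convolution already met in (\ref{Dirichlet_PQ}) and (\ref{psi_self_amp}). A nonzero term requires $\Omega(d)=k$ and $\Omega(r/d)=l$, and complete additivity then gives $\Omega(r)=k+l$. So $z\in\mathbb{H}^{\odot k+l}$.

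For part 2, I will compute $w = Y^\dagger x = \sum_m y_m^* \hat{b}_m \sum_n x_n\ket{n}$. By (\ref{bn_explic}), $\hat{b}_m\ket{n} = \ket{n/m}$ when $m\mid n$ and vanishes otherwise, so
\[
w_r = \sum_m y_m^*\, x_{mr}.
\]
A nonzero term requires $\Omega(m)=k$ and $\Omega(mr)=l$, hence $\Omega(r) = l-k$, and $w \in \mathbb{H}^{\odot l-k}$. When $k>l$ no such $r$ exists, so $w=0$, and I will read $\mathbb{H}^{\odot l-k}$ as $\{0\}$ in that case. For $k=l$ the only possible index is $r=1$, which matches the computation leading to (\ref{B_triv_order}).

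There is essentially no obstacle. The only point needing care is justifying the identities (\ref{phipsi}) and (\ref{phibarpsi}) as convergent expressions, since the sums are infinite. I would handle this by noting that the coefficient formulas are finite sums in part 1, because $r$ has finitely many divisors. In part 2 the sum over $m$ runs through all $m$ with $\Omega(m)=k$, which can be infinite, but it is absolutely convergent by Cauchy--Schwarz, as $x,y\in\ell_2$. Alternatively I would simply interpret the statement at the level of these coefficient formulas. The support argument is then purely arithmetic.
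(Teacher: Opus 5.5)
Your argument is correct and is the same as the paper's: both read off $z_r=\sum_{d\mid r}y_d\,x_{r/d}$ and $w_r=\sum_m y_m^*\,x_{mr}$, then use complete additivity of $\Omega$ to show that any nonzero term forces $\Omega(r)=k+l$, respectively $\Omega(r)=l-k$. Your remarks on the absolute convergence of $w_r$ and on the case $k>l$ are minor refinements the paper leaves implicit. The paper likewise leaves implicit that $z,w\in\ell_2$, which follows from the divisor-counting bound behind Lemma~\ref{lemma_quadest}.
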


\begin{proof}
\emph{1.} Note $z_n = \sum_{d|n} y_d x_{n/d}$. At the same time, $y_d x_{n/d} \neq 0 $ implies $\Omega(d) = k$ and $\Omega(n/d) = l$, i.e. $\Omega(n) = k+l$. 

\emph{2.} Note $w_n = \sum_{r} y_r^* x_{rn}$. At the same time, $y_r^* x_{rn} \neq 0 $ implies $\Omega(r) = k$ and $\Omega(rn) = l$, i.e. $\Omega(n) = l-k$.
  \end{proof}
Identities (\ref{phipsi}) and (\ref{phibarpsi}) imply:
\begin{equation}\label{psibarpsipsi}
  P_+\left(|\Psi|^2\Psi\right) = \sum_{n\in \mathbb{N}} u_n e^{2\pi i \vec{n}\cdot \vec{\mu}},\quad \mbox{ where } u = X^\dagger X x.
\end{equation}
 In particular, equation  (\ref{SDE_main}), equivalently (\ref{SDE_main_Four}), can be expressed in the form:
\begin{equation}\label{Boson-Schrod-discrete}
i \hslash \dot{x} =  \mathcal{H}_o  x - t \,  X^\dagger X  x.
\end{equation}
From Lemma \ref{Lem_part_num}, we obtain:
\begin{corollary}
If  $\Psi \in \mathbb{H}^{\odot k}$ (equiv. $\ket\psi \in \mathbb{H}^{\odot k}$), then $ P_+ \left(|\Psi|^2\Psi \right)  \in \mathbb{H}^{\odot k}$ (equiv. $ \Bg{\psi} \Bgd{\psi} \ket\psi \in \mathbb{H}^{\odot k} $).
\end{corollary}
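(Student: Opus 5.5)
The plan is to apply Lemma \ref{Lem_part_num} twice, factoring the cubic term as $|\Psi|^2\Psi = \Psi^*(\Psi\Psi)$, that is, as a product followed by multiplication with a conjugate.

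\emph{Step 1.} Apply part 1 of Lemma \ref{Lem_part_num} with $\Phi=\Psi\in\mathbb{H}^{\odot k}$. This gives $\Psi^2\in\mathbb{H}^{\odot 2k}$. In matrix language, by (\ref{phipsi}) the coefficient vector of $\Psi^2$ is $Xx$, which is the Dirichlet self-convolution in (\ref{psi_self_amp}).

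\emph{Step 2.} Apply part 2 of Lemma \ref{Lem_part_num} with $\Phi=\Psi\in\mathbb{H}^{\odot k}$, and with $\Psi^2\in\mathbb{H}^{\odot 2k}$ in the role of the second argument. This yields $P_+(\Psi^*\Psi^2)\in\mathbb{H}^{\odot 2k-k}=\mathbb{H}^{\odot k}$. Since $\Psi^*\Psi^2=|\Psi|^2\Psi$, this is exactly the claim. By (\ref{phibarpsi}), the coefficient vector is $X^\dagger(Xx)=X^\dagger Xx$, which agrees with (\ref{psibarpsipsi}).

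\emph{Translation back.} By (\ref{psibarpsipsi}) and the definition of the gen-boson transform, $X^\dagger X x$ is the coefficient vector of $\Bg{\psi}\Bgd{\psi}\ket\psi$. This gives the equivalent operator form of the statement.

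\emph{Main obstacle.} The only delicate point is well-definedness. Lemma \ref{Lem_part_num} is phrased formally, and $\Psi^2$ need not lie in $\ell_2$ for an infinitely supported $\Psi$. I would handle this at the level of coefficients, where the argument is purely combinatorial. The coefficient $w_n=\sum_r x_r^*(Xx)_{rn}$ involves only indices $r$ with $\Omega(r)=k$ and $\Omega(rn)=2k$. The support computation in the proof of the lemma therefore holds term by term, whenever the sums defining $u=X^\dagger Xx$ converge. Convergence is guaranteed because $X$ is bounded: the operator norms $\|\hat b_n^\dagger\|=1$ from (\ref{bn_norms}) control it, at least under the standing assumption that $|\Psi|^2\Psi\in L_2$. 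Under that assumption, $u$ is supported on $\{n:\Omega(n)=k\}$, so $u\in\mathbb{H}^{\odot k}$.
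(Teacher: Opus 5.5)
Your Steps 1 and 2 are exactly the paper's argument. The paper records the corollary as an immediate consequence of Lemma~\ref{Lem_part_num}: part 1 gives $\Psi^2\in\mathbb{H}^{\odot 2k}$, and part 2, with $\Phi=\Psi\in\mathbb{H}^{\odot k}$ and second argument $\Psi^2$, gives $P_+(|\Psi|^2\Psi)\in\mathbb{H}^{\odot 2k-k}$. The identification $u=X^\dagger Xx=\Bg{\psi}\Bgd{\psi}\ket\psi$ then comes from (\ref{psibarpsipsi}).

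Your well-definedness paragraph, however, rests on a false claim. Your conclusion there is correct, but the justification is not.

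\emph{Why the claim fails.} From (\ref{bn_norms}) the triangle inequality only gives $\|X\|\le\sum_n|x_n|$, which is infinite for a general $x\in\ell_2$. In fact $X$ is multiplication by $\Psi$ on $H_2(\mathbb{T}^\omega)$, and it is unbounded on $\ell_2(\mathbb{N})$ whenever $\Psi$ is unbounded. This already happens for single-particle states $\Psi=\sum_p x_p e^{2\pi i\mu_p}$ with $x\notin\ell_1$. Such $\Psi$ still satisfy $|\Psi|^2\Psi\in L_2$, so your standing assumption does not rescue boundedness.

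\emph{What to use instead.} The right tool is Lemma~\ref{lemma_quadest}. For $\ket\psi\in\mathbb{H}_{\text{SP}}^{\odot k}$, the map $T:\phi\mapsto\psi\odot\phi$ is bounded from $\mathbb{H}_{\text{SP}}^{\odot k}$ to $\mathbb{H}_{\text{SP}}^{\odot 2k}$, with $\|T\|\le\binom{2k}{k}^{1/2}\|\psi\|$. Consequences:
\begin{itemize}
\item $y=Xx=T\psi\in\ell_2$ automatically, so your worry that $\Psi^2$ might fail to be square-summable does not arise in a fixed particle sector.
\item Each $u_n=\sum_r x_r^*y_{rn}$ is the inner product of $T\ket{n}=\sum_r x_r\ket{rn}$ with $y$, and it converges absolutely by Cauchy--Schwarz.
\item $u$ coincides with $T^*y$ on $\{n:\Omega(n)=k\}$ and vanishes elsewhere by the support count.
\end{itemize}
Hence $u\in\mathbb{H}^{\odot k}$ with $\|u\|\le\binom{2k}{k}\|\psi\|^3$, and no extra hypothesis such as $|\Psi|^2\Psi\in L_2$ is needed.
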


Note that (\ref{Boson-Schrod-discrete}) is the Euler-Lagrange equation for the energy functional: 
\begin{equation}\label{Energy}
  \mathcal{E} = \langle x\,|\,\mathcal{H}_o x \rangle - \frac{t}{4}\, \| X x \|^2.
\end{equation}
In the notation of Subsection \ref{Subsection_gen_boson}, we have $\ket{\psi} = \sum_n x_n \ket{n}$. The nonlinear term, which represents spontaneous amplification, can be equivalently represented in the forms:
\begin{equation}\label{Nonlin_equiv}
   \| X x \|^2 = \left\| \Bgd{\psi}\, \ket{\psi} \right\|^2 
  = \left\| \, \psi\odot \psi \right\|^2.
\end{equation}
For brevity we use a short-hand notation $\left\| \, \ket{\psi}\odot \ket{\phi} \right\| = \left\| \psi\odot \phi \right\|$.

\subsection{Examples of stationary solutions for a fixed number of particles}\label{Subsect_finite_sols}

In what follows we display specific eigenmodes, using special assumptions. We will also display an example which evidences that these are not all possible solutions. Some motivation for the underlying assumptions is given in  Section \ref{Section_Appendix}. 

The main problem is to analyze the expression $\Bg{\psi}\Bgd{\psi}\ket\psi $.
We consider the Ansatz:
\begin{equation}\label{the_x_assumpt}
 \ket{ \psi } = \sum_{j = 1}^{m} x_{n_j} \ket{n_j}, \, \mbox{ where: } \Omega(n_j) = N, \mbox{ and all pairs of } n_j
  \mbox{ are mutually prime. } .
\end{equation}  
We also assume normalization $\sum_{j} \, |x_{n_j}|^2 =1$. Note that the distinct states $\ket{n_j}$ do not overlap at any sites. 
Next,
\[
\Bgd{\psi}\ket\psi =  \sum_{k=1}^m x_{n_k} \, \hat{b}_{n_k}^\dagger\, \sum_{j = 1}^{m} x_{n_j} \ket{n_j} = \sum_{j, k=1}^m x_{n_j} x_{n_k}\, \ket{ n_j n_k}.
\]
Subsequently, 
\[
\begin{array}{ccc}
\Bg{\psi}\Bgd{\psi}\ket\psi & = &  \sum_{l=1}^m x_{n_l}^* \, \hat{b}_{n_l}\, \sum_{j, k=1}^m x_{n_j} x_{n_k}\, \ket{ n_j n_k} \\
  && \\
   & =  &  \sum_{j, k, l =1}^m x_{n_l}^*x_{n_j} x_{n_k}\, \ket{ n_j n_k/n_l}\\
   && \\
    & = &  \sum_{k=1}^m \left( 2\|x\|^2 -  |x_{n_k}|^2 \right)\, x_{n_k}\, \ket{ n_k}\\
\end{array}
\]
 where we have used the fact that $n_j n_k/n_l$ is nonzero only when $l = j$ or $l=k$. In summary,
 \begin{equation}\label{XdXx}
\Bg{\psi}\Bgd{\psi}\ket\psi  =    \sum_{k=1}^m \left( 2 - |x_{n_k}|^2 \right)\, x_{n_k}\,\ket{ n_k} \mbox{ under the assumptions (\ref{the_x_assumpt})}.
 \end{equation} 
  It follows that the stationary solutions of (\ref{SDE_main}) are characterized by the system of equations 
\begin{equation}\label{E_eq}
  \frac{U}{2} Q(n_k) - \left(\frac{U}{2} + \mu \right) N  - t \, \left( 2-  |x_{n_k}|^2 \right) = E, \quad  k = 1,2,\ldots m.
\end{equation}
Summing over $k$ and solving for $E$, we obtain
\begin{equation}\label{the_E}
  E = \frac{U}{2} \langle Q\rangle - \left(\frac{U}{2} + \mu\right)N - t\left(2-\frac{1}{m}\right), \quad \mbox{ where } \, 
  \langle Q\rangle = \frac{1}{m} \sum_{k=1}^{m} \, Q(n_k).
\end{equation}
Moreover, substituting this into (\ref{E_eq}), we obtain close-form expressions for the probability distribution:
\begin{equation}\label{amplit}
  |x_{n_k}|^2 = \frac{1}{m} + \frac{U}{2t}\left[\langle Q \rangle - Q(n_k)\right], \quad  k = 1,2,\ldots m.
\end{equation}
Suppose $t>0$, and consider a measurement of the quasi-particle state $\ket{\psi}$ in the Fock basis. It is more likely to find it in a state $\ket{n_k}$ with lower $Q(n_k)$. For example, when $N = 5$, the state $\ket{n} = \ket{ 2^1 3^1 5^1 7^1 11^1}$ corresponds to $Q(n) = 5$, whereas $\ket{n} = \ket{7^5}$ corresponds to $Q(n) = 25$. In this case, $\langle Q \rangle \simeq 12.43$, and so the maximal difference $\langle Q \rangle - Q(7^5) \simeq -12.57$. This imposes a constraint on $U/(2t)$, depending on $m$, via the requirement that the probabilities (\ref{amplit}) must be non-negative.
\vspace{.5cm}

\noindent
\emph{Remark 1.}
General solutions in the single-particle subspace $\mathbb{H}_{\text{SP}}$ satisfy Ansatz (\ref{the_x_assumpt}). Indeed, in this case $\ket{ \psi} = \sum_{p \in P} x_{p} \ket{p}$, where $P$ is a finite, $m$-element, subset of the set of primes. Of course,  $\Omega(p) = Q(p) = 1$. In this case, (\ref{the_E}) assumes a simpler form:
\begin{equation}\label{the_Ep}
  E = - \mu - t\left(2-\frac{1}{m}\right), \quad \mbox{ where }  
 \end{equation} 
while (\ref{amplit}) is reduced to the statement $|x_{p}|^2 = \frac{1}{m}$, where $m$ is the number of nonzero amplitudes (the length of the support of the eigenstate). It is easily seen that these are the only solutions in the single-particle space. 
 \vspace{.5cm}

\noindent
\emph{Remark 2.} Replacing $\mathcal{H}_o$ by $\mathcal{H}^b_o$ in the SDE as well as modifying condition (\ref{the_x_assumpt}) by requiring $\omega(n_j) = N$ rather than $\Omega(n_j) = N$, one obtains the constant solution $|x_{n_j}|^2= 1/m$, similarly to the case of the previous remark.

\section{The energy functional}\label{Section_minimizer}

In this section we consider the question of existence of the global minimizer of the energy functional. The setting is that of the microcanonical ensemble, wherein the minimizer is sought in the fixed particle-number subspace of the Fock space.

\subsection{Analytical foundations}

For a general Fock state, i.e. $\ket\psi = \sum_{n=1}^{\infty} x_n \ket{n}$, we have
\begin{equation}\label{dotodot_norm_standard}
  \|\psi \odot \psi\|^2 = \sum_{n} \left|\sum_{d|n} x_d x_{n/d}\right|^2.
  \end{equation}
Note that the generalized Fourier transform assigns $\ket{\psi} \rightarrow \Psi (\vec{\mu}) $ given by (\ref{thePsi}).  Moreover, an inspection shows that
\begin{equation}\label{psidotpsiFour}
  \ket{\psi} \odot \ket{\psi} \rightarrow \Psi (\vec{\mu})^2.
\end{equation}
In particular,  by Cauchy-Schwartz inequality, we have
\[ 
\|\psi\|^2 = \int_{\mathbb{T}^\omega} |\Psi(\vec{\mu})|^2 \, d\vec{\mu} \leq \sqrt{\int_{\mathbb{T}^\omega} |\Psi(\vec{\mu})|^4 \, d\vec{\mu}} \sqrt{\int_{\mathbb{T}^\omega} 1\, d\vec{\mu} } = 
\sqrt{\int_{\mathbb{T}^\omega} |\Psi(\vec{\mu})^2|^2 \, d\vec{\mu}} = \|\psi \odot \psi\|.
\]
In summary,
\begin{equation}\label{psidotpsinormsup}
  \|\psi\|^2 \leq\|\psi \odot \psi\|,
\end{equation}  
and the two sides are equal if and only if $\Psi (\vec{\mu}) = \exp(2 \pi i \vec{n}\cdot\vec{\mu})$ (a monomial). Therefore the problem of finding the minimum of the functional $\ket{\psi} \mapsto\|\psi \odot \psi\|^2$ is trivial: the minimum is attained on monomials. 

On the other hand, for a general Fock state,  $\|\psi \odot \psi\|^2$ is not bounded above on the unit sphere $\{\ket{\psi}: \|\psi\|=1\}$. Indeed, to this end let $\ket\psi = n^{-1/2} \sum_{k=1}^{n} \ket{k}$, and let $d(m)$ denote the number of divisors of integer $m$. Thus, (\ref{dotodot_norm_standard}), and convexity of the parabola, give
\[
 \|\psi \odot \psi\|^2 \, \geq \, \frac{1}{n}\sum_{k\leq n} d(k)^2 \geq 
\left(\frac{1}{n}\sum_{k\leq n} d(k)\right)^2 
 \sim \log^2 n, 
\]  
which diverges to infinity with $n$. Here, we have used a well-known asymptotic for the growth of the summatory function of $d(m)$, e.g. see \cite{Apostol2}.

However, the functional is bounded on subspaces $\mathbb{H}_{\text{SP}}^{\odot N}$, where the number of particles is fixed at $N$. Indeed, we have the following
\begin{lemma}\label{lemma_quadest}
Let $\ket\psi, \ket\phi \in \mathbb{H}_{\text{SP}}^{\odot N}$ (not necessarily normalized). Then,
\[
  \|\psi \odot \phi\|^2  \leq \binom{2N}{N} \, \|\psi\|^2 \|\phi\|^2.
\]
\end{lemma}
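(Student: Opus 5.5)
Write $\ket\psi=\sum_{\Omega(m)=N}x_m\ket m$ and $\ket\phi=\sum_{\Omega(k)=N}y_k\ket k$. As in (\ref{psi_self_amp}), the $n$-th coefficient of $\psi\odot\phi$ is the Dirichlet convolution
\[
(\psi\odot\phi)_n=\sum_{d\mid n} x_d\, y_{n/d}.
\]
By Lemma \ref{Lem_part_num} this coefficient vanishes unless $\Omega(n)=2N$. Moreover, only the divisors $d$ with $\Omega(d)=N$ contribute. The plan is to apply Cauchy--Schwarz to each coefficient separately and then to control the number of terms.

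\textbf{Step 1: Cauchy--Schwarz on each coefficient.} For fixed $n$ with $\Omega(n)=2N$, let $D_N(n)=\#\{d\mid n:\ \Omega(d)=N\}$. Then
\[
\Bigl|\sum_{d\mid n,\ \Omega(d)=N} x_d\,y_{n/d}\Bigr|^2\le D_N(n)\sum_{d\mid n,\ \Omega(d)=N}|x_d|^2|y_{n/d}|^2 .
\]

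\textbf{Step 2: bound $D_N(n)$.} Write $n=\prod_p p^{a_p}$ with $\sum_p a_p=2N$. A divisor $d$ with $\Omega(d)=N$ corresponds to a choice of exponents $0\le b_p\le a_p$ with $\sum_p b_p=N$. This count is the coefficient of $z^N$ in $\prod_p(1+z+\dots+z^{a_p})$.

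That coefficient is at most the coefficient of $z^N$ in $\prod_p(1+z)^{a_p}=(1+z)^{2N}$, because $1+z+\dots+z^{a}$ is coefficientwise dominated by $(1+z)^a$. Hence $D_N(n)\le\binom{2N}{N}$. Combinatorially, this says we choose $N$ of the $2N$ prime factors of $n$ counted with multiplicity, and distinct divisors arise from at least one such labelled choice.

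\textbf{Step 3: sum over $n$.} Summing over $n$ gives
\[
\|\psi\odot\phi\|^2\le\binom{2N}{N}\sum_n\sum_{d\mid n}|x_d|^2|y_{n/d}|^2 .
\]
Reindex the double sum by the pairs $(d,k)=(d,n/d)$. Each pair occurs exactly once, so the double sum equals $\sum_{d,k}|x_d|^2|y_k|^2=\|\psi\|^2\|\phi\|^2$, which is the claimed inequality.

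\textbf{Main obstacle.} The only non-routine point is the uniform divisor-count bound in Step 2. The coefficient-domination argument handles it cleanly. I would also check that the bound is sharp when $n$ is squarefree, for instance with $\psi$ and $\phi$ spread over all $N$-subsets of $2N$ primes, which suggests no better constant is available.
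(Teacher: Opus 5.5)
Your proof is correct and is essentially the paper's argument: Cauchy--Schwarz (the paper's ``convexity of the parabola'') applied to each coefficient, which has at most $\binom{2N}{N}$ nonzero terms, followed by the reindexing identity (\ref{sum_dual}). Your coefficientwise domination of $1+z+\dots+z^{a}$ by $(1+z)^{a}$ also properly justifies the divisor-count bound, which the paper only asserts.

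One correction to your closing aside about sharpness: spreading $\psi,\phi$ over the $N$-subsets of only $2N$ primes does not reach the constant. For $N=1$, $\psi=\phi=\ket{p}+\ket{q}$ gives $\|\psi\odot\phi\|^2=6$, against $\binom{2}{1}\|\psi\|^2\|\phi\|^2=8$. The constant is approached only as the number of primes tends to infinity, e.g.\ the paper's $\psi_m$ with $\|\psi_m\odot\psi_m\|^2=2-1/m$.
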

\begin{proof}
  Let
$
\ket\psi = \sum_{n:\Omega(n) = N} x_n \ket{n}, \ket\phi = \sum_{n:\Omega(n) = N} y_n \ket{n}
$
Note that $\ket\psi \odot \ket\phi \in \mathbb{H}_{\text{SP}}^{\odot 2N}$. The number of possible representations of $m$ with $\Omega (m) = 2N$ as a product, say, $m=ab$, where $\Omega(a) = \Omega(b) =N$  is at most $\binom{2N}{N}$. (It is exactly  $\binom{2N}{N}$ precisely when $m$ is a product of 2N different primes.) Also, note that 
\begin{equation}\label{sum_dual}
  \sum_m \sum_{d|m} |x_d|^2 |y_{m/d}|^2
 =  \sum_{n} |x_n|^2 \,  \sum_{n} |y_n|^2.
\end{equation}
Indeed, the left hand side is equal to the action of $\sum_{n} |x_n|^2 \hat{b}_n^\dagger$ on the vector $[ |y_1|^2, |y_2|^2, |y_3|^2, \ldots ]^T$, followed by summation of all terms of the resulting vector. This is equivalent to the sum of terms of the form $|x_n|^2\,\sum_{m=1}^\infty |y_m|^2$. 
Next, (\ref{sum_dual}) and the convexity of the parabola imply 
\begin{equation}\label{odot_norm_est}
   \|\psi \odot \phi\|^2 \, = \sum_{m} \left|\sum_{d|m} x_d y_{m/d}\right|^2 
 \leq \sum_{m} \binom{2N}{N} \sum_{d|m} |x_d|^2 |y_{m/d}|^2
 = \binom{2N}{N} \sum_{n} |x_n|^2 \, \sum_{n} |y_n|^2,
\end{equation}
where we have used the convention that $x_n=0$ whenever $\Omega(n) \neq N$.  This proves the lemma.
\end{proof}

With this understood, consider the functional
\begin{equation}\label{naive_BH}
 \mathbb{H}_{\text{SP}}^{\odot N} \ni \ket\psi \mapsto  \mathcal{E}[\psi] = \langle \psi |\mathcal{H}_0 |\psi \rangle - \frac{t}{4}\|\psi \odot \psi\|^2, \quad \mbox{ where } t> 0,
\end{equation} 
where $\mathcal{H}_0 $ is as in (\ref{H0_psi}) with the potentials $U\geq 0 $ and $\mu\geq 0$. Since $Q(n) \leq \Omega(n)^2 = N^2$, the first term, $\langle \psi |\mathcal{H}_0 |\psi \rangle$, is bounded on the unit sphere in $\mathbb{H}_{\text{SP}}^{\odot N}$. In light of Lemma \ref{lemma_quadest} the disorder term is also bounded. Summarizing,
\begin{equation}\label{E_below_first}
 -N (U/2+\mu) -  (t/4)\binom{2N}{N}\, \leq \,   \mathcal{E}[\psi] \, \leq \, N^2U/2,\, \quad \mbox{ whenever } \|\psi\| =1.
\end{equation}
If, in addition, the underlying array of boson sites were finite (i.e., only a finite number of primes were at play), the functional would be reduced to a smooth function on a finite-dimensional sphere. Obviously, in such a case there is at least one state where the function attains its minimal value.
 
 In the case of finite arrays,  the nature of states that minimize this energy functional depends intricately on $N$, as well as on the number of sites; this is illustrated in Subsection \ref{subsection_three_on2} where it is found, via calculus, that when $\mu = U = 0$, the two-particle minimizer on two sites is (up to the phase factors) $\ket\psi = 7^{-1/2} \left( \sqrt{2} \ket{p_1^2} + \sqrt{3} \ket{p_1p_2} + \sqrt{2} \ket{p_2^2} \right)$. In addition, simple finite-array solutions can be obtained by utilizing Ansatz (\ref{the_x_assumpt}), see Subsection \ref{Appendix_coprime}. 
 
However, we are mainly interested in a general setting that does not introduce any a priori restrictions. The following example illustrates the nature of the problem of finding minima of $\mathcal{E}[\psi]$:
\vspace{.5cm}

\noindent \emph{Example.}
 Consider $\mathcal{E}[\psi]$ in the single-particle space, i.e. $\ket\psi \in \mathbb{H}_{SP}$. It is interesting to examine the special states given at the end of Subsection \ref{Subsect_finite_sols}, say, $\ket{ x_m } = m^{-1/2}\,\sum_{j=1}^m  \ket{p_j}$, where $\{p_j; j = 1,2,\ldots m\}$ is an arbitrary m-tuple of distinct primes. A direct calculation shows that $\|\psi_m \odot \psi_m\| = 2-1/m $. In light of inequality (\ref{odot_norm_est}) these norms approach the upper bound of $\binom{2}{1} = 2$. Furthermore,  
\[
\mathcal{E}[\psi_m] = - \mu - \frac{t}{2} + \frac{t}{4m} \rightarrow -\mu - \frac{t}{2}\quad \mbox{ as } m \rightarrow \infty.
\]
Thus, the sequence $\psi_m$ is the minimizing sequence of the functional. At the same time, there seems to be little hope of finding a minimizer state. 
\vspace{.5cm}

  To ensure the existence of minimizing states in the infinite support setting we will modify the functional by replacing the standard norms by a type of tempered norms. We present this approach in the following sections.

\subsection{The tempered norms}
The problem of finding maxima becomes tractable when the standard norm used in the definition of $\mathcal{E}[\psi]$ is replaced with a suitably tempered norm.  
To set the ground, we invoke weighted Hilbert spaces akin to the Sobolev spaces, \cite{Sowa2017}. Namely, for $\sigma \in \mathbb R$ define
\[
\mathrm{h}^\sigma = \{f:\mathbb N \rightarrow \mathbb C: \|f\|_\sigma: = \left(\,\sum\limits_{n\in\mathbb{N}} |f(n)|^2\, n^{2\sigma}\,\right)^{1/2} <\infty  \},
\]
with the inner product  $\langle f | g\rangle_\sigma = \sum_{n} f(n)^*\, g(n)\, n^{2\sigma}$ .
The following facts are easily established by standard arguments:
\begin{itemize}

\item
The collection of vectors $e_n = n^{-\sigma}\delta_n$ furnish an orthonormal basis in $\mathrm{h}^\sigma$. Furthermore, this basis furnishes an isomorphism $\mathrm{h}^\sigma \longleftrightarrow \ell_2$. In particular, $\mathrm{h}^\sigma $ is a (complete) Hilbert space.

\item
The dual space of $\mathrm{h}^\sigma$ is $(\mathrm{h}^\sigma)^* = \mathrm{h}^{-\sigma}$.

\item
The following analogue of the Rellich Compactness Theorem holds: If $\sigma_1>\sigma_2$, then the natural inclusion $I: \mathrm{h}^{\sigma_1} \hookrightarrow \mathrm{h}^{\sigma_2}$ is a compact operator. Indeed, let $I_N$ be the finite rank operator
given by $I_N[f](n) = f(n)$ for $n\leq N$, and $I_N[f](n) = 0$ for $n> N$. It is easily seen that $\|I - I_N\|_{\mathrm{h}^{\sigma_1}\rightarrow \mathrm{h}^{\sigma_2}} \rightarrow 0$ as $N \rightarrow \infty$.
\end{itemize}

We now turn attention to the functional 
\begin{equation}\label{temperedpsidotpsi}
  \ket{\psi} \mapsto\|\psi \odot \psi\|_{-\sigma}^2, \quad \mbox{ where } \sigma > 1/2.
\end{equation}
Let $\ket{\psi}, \ket{\phi} \in \mathbb{H}_{SP}^{\odot N}$. Observe,
\[
\|\psi \odot \phi\|_{-\sigma}^2 
= \sum_{m }  \left|\, \sum_{d|m} x_d y_{m/d}\right|^2 m^{-2\sigma}
= \sum_{m }  \left|\, \sum_{d|m} x_dd^{-\sigma} y_{m/d}\left(\frac{m}{d}\right)^{-\sigma}\right|^2
\]
Thus, using Lemma \ref{lemma_quadest} with substitutions $x_n \mapsto x_n n^{-\sigma}$, $y_n \mapsto y_n n^{-\sigma}$, we obtain
\begin{equation}\label{odot_sigma_norm_estim}
  \|\psi \odot \phi\|_{-\sigma}^2  \leq \binom{2N}{N} \, \|\psi\|_{-\sigma}^2 \|\phi\|_{-\sigma}^2.
\end{equation}
This allows us to demonstrate the following:
\begin{lemma} \label{weak_conv_prod}
 Let $\ket{\psi_k}\in \mathbb{H}_{SP}^{\odot N}$, $k = 1,2,\ldots $ be a sequence of states, which converges weakly to the limit $\ket{\psi_\infty}$. Then, the sequence $\psi_k\odot\psi_k$ converges strongly to $\psi_\infty\odot \psi_\infty$ in $h^{-\sigma}$ for all $\sigma >0$. 
\end{lemma}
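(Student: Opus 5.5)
We combine the bilinear estimate (\ref{odot_sigma_norm_estim}) with the compactness of the inclusion $\mathrm{h}^{0}\hookrightarrow \mathrm{h}^{-\sigma}$ for $\sigma>0$, which is the Rellich-type fact listed above (with $\sigma_1=0>\sigma_2=-\sigma$). Weak convergence in $\ell_2=\mathrm{h}^0$ is understood here. The sequence $\psi_k$ is then bounded in $\ell_2$ by the uniform boundedness principle, say $\|\psi_k\|\leq C$. Since $\mathbb{H}_{SP}^{\odot N}$ is a closed subspace, the weak limit $\psi_\infty$ also lies in $\mathbb{H}_{SP}^{\odot N}$.

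\emph{Step 1: strong convergence in $\mathrm{h}^{-\sigma}$.} A compact operator maps weakly convergent sequences to strongly convergent ones. Hence $\|\psi_k-\psi_\infty\|_{-\sigma}\to 0$. If one prefers to avoid the abstract fact, argue directly. Split $\|\psi_k-\psi_\infty\|_{-\sigma}^2$ into the sum over $n\leq M$, which tends to zero because each coordinate converges (pair with $\delta_n$), and the tail over $n>M$, which is at most $M^{-2\sigma}(2C)^2$ and is small once $M$ is chosen first.

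\emph{Step 2: bilinear splitting.} The product $\odot$ is bilinear and commutative, since it is Dirichlet convolution. Therefore
\[
\psi_k\odot\psi_k-\psi_\infty\odot\psi_\infty=(\psi_k-\psi_\infty)\odot\psi_k+\psi_\infty\odot(\psi_k-\psi_\infty).
\]
All factors lie in $\mathbb{H}_{SP}^{\odot N}$, so (\ref{odot_sigma_norm_estim}) applies to each term and gives
\[
\|\psi_k\odot\psi_k-\psi_\infty\odot\psi_\infty\|_{-\sigma}\leq \binom{2N}{N}^{1/2}\|\psi_k-\psi_\infty\|_{-\sigma}\left(\|\psi_k\|_{-\sigma}+\|\psi_\infty\|_{-\sigma}\right).
\]
We have $\|\cdot\|_{-\sigma}\leq\|\cdot\|$, so the bracket is at most $2C$. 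Step 1 then gives the conclusion.

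\emph{Main obstacle.} The only delicate point is that (\ref{odot_sigma_norm_estim}) was derived from Lemma \ref{lemma_quadest} by the substitution $x_n\mapsto x_n n^{-\sigma}$. This substitution preserves the support condition $\Omega(n)=N$, so it is valid for every $\sigma>0$ and not only for $\sigma>1/2$. I would check this explicitly, because it is what allows the statement to hold for all $\sigma>0$.
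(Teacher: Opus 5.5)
Your proof is correct and follows the paper's argument essentially verbatim. Compactness of $\mathrm{h}^0\hookrightarrow\mathrm{h}^{-\sigma}$ gives $\|\psi_k-\psi_\infty\|_{-\sigma}\to 0$, and the same bilinear splitting together with (\ref{odot_sigma_norm_estim}) and $\|\cdot\|_{-\sigma}\le\|\cdot\|$ finishes it. Your version is slightly tidier: you get the sharper constant $\binom{2N}{N}^{1/2}$ where the paper writes $\binom{2N}{N}$, you bound the norms via uniform boundedness instead of assuming normalization, and you note explicitly that the weak limit stays in $\mathbb{H}_{SP}^{\odot N}$ and that the estimate needs no restriction $\sigma>1/2$.
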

\begin{proof}
Fix arbitrary $\sigma > 0 $.
 Since the inclusion $h^0 \hookrightarrow h^{-\sigma}$ is compact, we have 
  \[
  \|\psi_k -\psi_\infty\|_{-\sigma} \rightarrow 0 \quad \mbox{ as } k \rightarrow \infty .
  \]
    Using (\ref{odot_sigma_norm_estim}), we obtain
\begin{eqnarray*}
  \|\psi_\infty\odot\psi_\infty - \psi_k\odot\psi_k\|_{-\sigma}  &=&  \|\psi_\infty\odot(\psi_\infty-\psi_k) - \psi_k\odot(\psi_k -\psi_\infty)\|_{-\sigma} \\
  && \\
   &\leq & \|\psi_\infty\odot(\psi_k-\psi_\infty)\|_{-\sigma} + \|\psi_k\odot(\psi_k -\psi_\infty)\|_{-\sigma} \\
   &&\\
  &\leq & \binom{2N}{N}\, \left(\, \|\psi_\infty\|_{-\sigma} + 
   \|\psi_k\|_{-\sigma} \, \right)\, \|\psi_k -\psi_\infty\|_{-\sigma}.
\end{eqnarray*}
Note that $\|\psi_k\|_{-\sigma
}  \leq \|\psi_k\| = 1$. Furthermore, $\|\psi_\infty\|_{-\sigma} \leq \|\psi_\infty\| \leq \liminf \|\psi_k\| = 1$, since norms are lower semicontinuous under weak convergence. 
 Therefore, $  \|\psi_\infty\odot\psi_\infty - \psi_k\odot\psi_k\|_{-\sigma}  \rightarrow 0$. This completes the proof.  
\end{proof}

\subsection{The tempered energy functional} 

Since the functional $\mathcal{E}[\psi]$ is bounded below, there exists a sequence of states $\ket{\psi_k}$ such that $\mathcal{E}[\psi_k]$ tends to the minimum. However, that does not guarantee that there exist states where the energy is exactly the minimum. Indeed, while $\ket{\psi_k}$ has a weak limit, it does not necessarily have a strong limit. In order to prove the existence of a limit that is the minimizer, it appears necessary to have some sort of compactness. 
Therefore, it is interesting to consider a modification of  $\mathcal{E}[\psi]$  which will ensure that and to ensure the existence of minimizer states. To this end 
 consider the energy functional:
\begin{equation}\label{tempered_BH}
     \mathbb{H}_{\text{SP}}^{\odot N} \ni \ket\psi \mapsto  \mathcal{E}_\sigma[\psi] = \langle \psi |\mathcal{H}_0 |\psi \rangle_{-\sigma} - \frac{t}{4}\|\psi \odot \psi\|_{-\sigma}^2, \quad \mbox{ where } t> 0, \mbox{ and }  \sigma> 0.  
\end{equation} 
We are interested in the problem of existence of the minimum of $\mathcal{E}_\sigma[\psi]$ on the unit sphere determined by $\|\psi\|= \|\psi\|_0 = 1$ (i.e., in the standard norm) with the additional condition $\ket{\psi} \in \mathbb{H}^{\odot N} $. In other words we require that $\ket{\psi}$ be an $N$-particle state. Naturally, the use of the standard norm in the condition  $\|\psi\|=1$  is dictated by the structure of quantum mechanics, it could not be revised. On the other hand, the use of tempered norms to define the energy is not un-physical. The weights introduce a type of screening; in other words, the energy contribution from distant parts, i.e., when $n$ is large are discounted.  The Bose-Hubbard part consists of a negative and a positive part (assuming $U, \mu \geq 0$):
\begin{equation}\label{theH0part}
\langle \psi |\mathcal{H}_0 |\psi \rangle_{-\sigma} = \sum_{n: \Omega(n)= N} \left[ \frac{U}{2}Q(n) - \left( \frac{U}{2}+\mu\right)\Omega(n) \right]\,|x_n|^2 n^{-2\sigma}
= -N \left( \frac{U}{2}+\mu\right) \, \|\psi\|_{-\sigma}^2 + \frac{U}{2}\sum_{n: \Omega(n)= N} Q(n) \,|x_n|^2 n^{-2\sigma}.
\end{equation}

The negative part contributes more, when $\ket\psi$ is supported on sites with a small label, i.e. when it is local. However, the positive part introduces a ``penalty" for those $\ket\psi$ which accumulate more mass on a single site, e.g., $\ket{2^N}$ gives the highest possible positive contribution: $N^2U\,2^{-2\sigma-1}$. 
Using (\ref{odot_norm_est}), we have 
\begin{equation}\label{odot_sigma_est}
 \|\psi \odot \psi\|_{-\sigma}^2 \leq\|\psi \odot \psi\|^2 \leq \binom{2N}{N}.
\end{equation}
The contribution from $\|\ket{\psi} \odot \ket{\psi}\|_{-\sigma}^2$ is also greater when $\ket\psi$ is local.  
%
%

\subsection{Existence of an energy-minimizing state}
In this subsection, we prove that there exists a state, which minimizes the functional (\ref{tempered_BH}). Namely, we have

\begin{theorem}\label{theorem_minimizer}
    Let $U,\,\mu >0$ be fixed. For arbitrary $\sigma >0$, and $N \in \mathbb{N}$, the energy functional (\ref{tempered_BH}) attains its minimum value, i.e. 
    \[ \mbox{ there exists } \, \ket{\psi_\infty} \in \mathbb{H}_{\text{SP}}^{\odot N}\, \mbox{ with } \|\psi_\infty\| =1, \,\, \mbox{ such that } \, 
    \mathcal{E}_\sigma[\psi_\infty]= \inf\,\,\left\{\mathcal{E}_\sigma[\psi]\, : \,  \|\psi\|=1, \ket{\psi} \in \mathbb{H}_{\text{SP}}^{\odot N} \right\}.
\]
\end{theorem}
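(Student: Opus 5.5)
The plan is to use the direct method of the calculus of variations. I would take a minimizing sequence, extract a weak limit, show that $\mathcal{E}_\sigma$ is weakly sequentially continuous on bounded subsets of $\mathbb{H}_{\text{SP}}^{\odot N}$, and then deal with the possible loss of norm in the weak limit by rescaling. By (\ref{E_below_first}), or directly by (\ref{odot_sigma_est}), the infimum $I$ is finite. Let $\ket{\psi_k}\in\mathbb{H}_{\text{SP}}^{\odot N}$ with $\|\psi_k\|=1$ and $\mathcal{E}_\sigma[\psi_k]\to I$. Since $\mathbb{H}_{\text{SP}}^{\odot N}$ is a closed subspace of $\ell_2$, after passing to a subsequence we have $\psi_k\rightharpoonup\psi_\infty\in\mathbb{H}_{\text{SP}}^{\odot N}$, with $s:=\|\psi_\infty\|\le 1$.

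\emph{Step 1 (continuity).} By (\ref{theH0part}), the quadratic part is $\langle\psi|K\psi\rangle$ for the diagonal operator $K$ with entries $c_n n^{-2\sigma}$, where
\[
c_n=\tfrac U2 Q(n)-\left(\tfrac U2+\mu\right)N \quad\text{for } \Omega(n)=N .
\]
Since $N\le Q(n)\le N^2$, the $c_n$ are bounded, and $n^{-2\sigma}\to 0$. Hence $K$ is compact; equivalently, this follows from the compact inclusion $\mathrm{h}^0\hookrightarrow\mathrm{h}^{-\sigma}$. Therefore $\langle\psi_k|K\psi_k\rangle\to\langle\psi_\infty|K\psi_\infty\rangle$. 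For the nonlinear term, Lemma \ref{weak_conv_prod} gives $\psi_k\odot\psi_k\to\psi_\infty\odot\psi_\infty$ strongly in $\mathrm{h}^{-\sigma}$, so the $\mathrm{h}^{-\sigma}$ norms converge. Consequently $\mathcal{E}_\sigma[\psi_\infty]=I$.

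\emph{Step 2 (the infimum is negative).} Take $n=p_1\cdots p_N$, a product of $N$ distinct primes. Then $Q(n)=N$, so $c_n=-\mu N<0$, and
\[
\mathcal{E}_\sigma[\ket n]=-\mu N n^{-2\sigma}-\tfrac t4\|n\odot n\|_{-\sigma}^2<0 .
\]
Hence $I<0$. This excludes $\psi_\infty=0$, because $\mathcal{E}_\sigma[0]=0$.

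\emph{Step 3 (recovering unit norm), which I expect to be the main obstacle.} Weak limits may lose mass, so a priori $0<s<1$. I would set $A=\langle\psi_\infty|K\psi_\infty\rangle$ and $B=\tfrac t4\|\psi_\infty\odot\psi_\infty\|_{-\sigma}^2$, and use homogeneity to write
\[
f(\lambda):=\mathcal{E}_\sigma[\lambda\psi_\infty]=\lambda^2A-\lambda^4B .
\]
We know $f(1)=A-B=I<0$. If $A\le 0$, then $f$ is clearly nonincreasing on $[1,\infty)$. If $A>0$, then $B>A$, and so $f'(\lambda)=2\lambda(A-2\lambda^2B)<0$ for $\lambda\ge1$. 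In both cases $f(1/s)\le f(1)=I$. But $\psi_\infty/s$ is an admissible unit-norm $N$-particle state, so $f(1/s)\ge I$. Thus $\psi_\infty/s$ attains the minimum; in fact strict monotonicity forces $s=1$. This yields the minimizer $\ket{\psi_\infty}$ claimed in Theorem \ref{theorem_minimizer}.
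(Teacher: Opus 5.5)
Your proof is correct and follows the paper's argument essentially step by step. You show the infimum is negative using a square-free $N$-particle state, take a weakly convergent minimizing sequence, get convergence of both terms from the compact inclusion $\mathrm{h}^0\hookrightarrow\mathrm{h}^{-\sigma}$ and Lemma \ref{weak_conv_prod}, and rule out loss of norm by scaling; your $f(\lambda)=\lambda^2A-\lambda^4B$ is the paper's inverted parabola in the variable $c=\lambda^2$, and the only cosmetic differences are that you conclude directly that $\psi_\infty/s$ is a minimizer instead of arguing by contradiction, and you explicitly note that the weak limit stays in the closed subspace $\mathbb{H}_{\text{SP}}^{\odot N}$.
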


\begin{proof}

Note that $\|\psi\|_{-\sigma} \leq \|\psi\| = 1$, so the first term on the right-hand side of (\ref{theH0part}) is finite and bounded below by $-N (U/2+\mu) $. Additionally, the second term on the right of (\ref{theH0part})  is positive and,  since $Q(n) \leq N^2$, it is bounded above by $N^2U/2$. Using (\ref{odot_sigma_est}) we can summarize as follows:
\begin{equation}\label{E_below}
 -N (U/2+\mu) -  (t/4)\binom{2N}{N}\, \leq \,   \mathcal{E}_\sigma[\psi] \, \leq \, N^2U/2,\, \quad \mbox{ whenever } \|\psi\| =1.
\end{equation}
This mirrors estimates (\ref{E_below_first}).
Moreover, 
\[ \mathcal{E}_{\sigma, 0}:= \inf_{\ket{\psi}: \|\psi\|=1}\mathcal{E}_\sigma[\psi]
\]
 is a strictly negative number. Indeed, it suffices to take $\ket{\psi} = \sum x_n \ket{n}$, such that $x_n \neq 0$ implies $Q(n) = \Omega(n) =N$, i.e. nontrivial $x_n$ are only found at Fock basis states $\ket{n}$ with no more than one particle per site. In such a case $\langle \psi |\mathcal{H}_0 |\psi \rangle_{-\sigma} = -\mu N\, \|\psi\|_{-\sigma}$. Thus, $\mathcal{E}_\sigma[\psi] <0$, \emph{a fortiori} $\mathcal{E}_{\sigma, 0} <0$. 

Next, we will demonstrate that there is a minimizer $\ket{\psi_{\infty}}$ of $\mathcal{E}_\sigma[\psi] $ such that $\|\psi_{\infty}\|_{-\sigma} \leq \|\psi_{\infty}\| = 1$.   Indeed, pick a sequence $\ket{\psi_{k}}$, $k = 1,2, \ldots$, such that $\|\psi_k\|=1$, and 
\[\mathcal{E}_\sigma[\psi_k] \rightarrow \mathcal{E}_{\sigma, 0} <0. 
\]
The unit sphere in $h^0$, i.e. $\{\ket{\psi}:\|\psi\|=1\}$, is weakly compact. Therefore there exists a subsequence (also denoted $\ket{\psi_{k}}$), which weakly converges to a certain $\ket{\psi_{\infty}}\in h^0$. 
Since the immersion  $h^0 \hookrightarrow h^{-\sigma}$ is compact, the sequence $\ket{\psi_k}$ converges strongly to $\ket{\psi_{\infty}}$ in $h^{-\sigma}$, i.e.    
\[
\| \psi_k -\psi_\infty \|_{-\sigma} \rightarrow 0, \quad \mbox{ and, automatically, }\quad   \|\psi_k\|_{-\sigma } \rightarrow \|\psi_\infty\|_{-\sigma }.
\]  
Lemma \ref{weak_conv_prod} implies 
\begin{equation}\label{quad_conv}
  \|\psi_k \odot \psi_k\|_{-\sigma } \rightarrow \|\psi_\infty \odot \psi_\infty\|_{-\sigma }.
\end{equation}
Furthermore, Let $\ket{\psi_k} = \sum_n x_n^{(k)} \ket{n} $, and $\ket{\psi_\infty} = \sum_n x_n^{(\infty)} \ket{n} $. Since  $\sum_{n} \left|x_n^{(k)} - x_n^{(\infty)}\right|^2 Q(n)\, n^{-2\sigma} \leq N^2 \, \|\psi_k - \psi_\infty\|_{-\sigma} \rightarrow 0$, one has 
\[
\sum_{n} \left|x_n^{(k)}\right|^2 Q(n)\, n^{-2\sigma} \rightarrow \sum_{n} \left|x_n^{(\infty)}\right|^2 Q(n) n^{-2\sigma}.
\]
Thus, (\ref{theH0part}) implies
\begin{equation}\label{lin_conv}
  \langle \psi_k |\mathcal{H}_0 |\psi_k \rangle_{-\sigma} \rightarrow
\langle \psi_\infty |\mathcal{H}_0 |\psi_\infty \rangle_{-\sigma}.
\end{equation}
Collecting (\ref{quad_conv}) and (\ref{lin_conv}) we see that 
 $ \mathcal{E}_\sigma[\psi_k ] \rightarrow  \mathcal{E}_\sigma[\psi_\infty ].$
Therefore,
\begin{equation}\label{final}
  \mathcal{E}_\sigma[\psi_\infty ] = \mathcal{E}_{\sigma, 0}.
\end{equation}
Finally, we need to demonstrate that  $\|\psi_\infty\|=1$. To this end, note that by weak convergence 
we have an \emph{a priori} estimate
\begin{equation}\label{limitnorm}
  \| \psi_\infty \| \leq \liminf \|\psi_k\| = 1.
\end{equation} 
Since $\mathcal{E}_{\sigma, 0} <0$, we also know that $\ket\psi_\infty \neq 0$, i.e. $\|\psi_\infty\|>0$. 
Furthermore, for $c\in \mathbb{R}$, define 
\[
f(c)  := \langle \psi_\infty |\mathcal{H}_0 |\psi_\infty \rangle_{-\sigma}\,c - \frac{t}{4} \|\ket{\psi_\infty} \odot \ket{\psi_\infty}\|_{-\sigma}^2\, c^2.
\]
Note that when $c>0$, $f(c) = \mathcal{E}_\sigma[\sqrt{c}\psi_\infty]$.
The graph of $c \mapsto f(c)$ is an inverted parabola and $f(0) = 0$. There is another zero, not necessarily distinct, say, $f(c_1)=0$. Since $f(1) = \mathcal{E}_\sigma[\psi_\infty]  <0$, necessarily $c_1<1$. This means $f(c)$ decreases further when $c$ increases above $1$.   
%
%
  Now, if $\|\psi_\infty\|<1$, then there is a $c> 1$, such that $\|\sqrt{c} \psi_\infty\|=1$, but then $\mathcal{E}_\sigma [\sqrt{c}\psi_\infty] < \mathcal{E}_{\sigma, 0}$, which is a contradiction. Therefore,  $\|\psi_\infty\|=1$.  
\end{proof}

\subsection{The Euler-Lagrange equation for the tempered energy functional and locality of minimizers}

 It is easily seen, see Subsection \ref{Appendix_EL}, that the energy the Euler-Lagrange equation for functional (\ref{tempered_BH}) assumes the form: 
\begin{equation}\label{tempered_EL}
  \left[\frac{U}{2} Q(n) - \left( \frac{U}{2} + \mu\right)\Omega(n)\right] n^{-2\sigma} x_n - t\,n^{-2\sigma} \sum_l  x_{l}^*  \sum_{d|ln} x_d x_{ln/d} \, l^{-2\sigma}
= \lambda\, x_n \quad \mbox{ for all } n.
\end{equation}
We have established that there are solutions for such a system of equations for any $\sigma >0$. Since $\sigma$ can be arbitrarily small, we take these to be approximate solutions of the corresponding equations with $\sigma =0$.   
Note that, by Cauchy-Schwartz 
\[
\sum_l  x_{l}^*  \sum_{d|ln} x_d x_{ln/d} \, l^{-2\sigma} \leq
\left(\sum_l |x_l|^2l^{-2\sigma}\right)^{1/2} \left(\sum_l \left|\sum_{d|ln} x_d x_{ln/d}\right|^2 \, (ln)^{-2\sigma}\right)^{1/2} n^\sigma \leq
\|\psi\|_{-\sigma}\, \|\psi \odot \psi\|_{-\sigma} \,n^\sigma
\]
This shows that if $\lambda\neq 0$, then $x_n$ diminish to zero at least as fast as $n^{-\sigma}$, which is a weak form of localization. 
\vspace{.5cm}

\noindent
\emph{Remark.} 
In the special case of one-particle space, only finitely many $x_n$ ($n$ prime) are non-zero. In this case,
\[
\sum_{d|ln} x_d x_{ln/d} =\left\{\begin{array}{ccc}
                                  2x_lx_n & \mbox{ if } & l\neq n \\
                                   &  &  \\
                                  x_n^2 & \mbox{ if } & l=n 
                                \end{array}
 \right.
\] 
Indeed, by assumption, $l,n$ and $d$ are primes. This implies
\[
\sum_l  x_{l}^*  \sum_{d|ln} x_d x_{ln/d} \, l^{-2\sigma} = \left(2\|\psi\|^2_{-\sigma} - |x_n|^2 n^{-2\sigma}\right) x_n.  
\]
Therefore, (\ref{tempered_EL}) assumes a simpler form:
\begin{equation}\label{tempered_EL_primes}
\left(-\mu n^{-2\sigma} - t\,n^{-2\sigma} \left(2\|\psi\|^2_{-\sigma} - |x_n|^2 n^{-2\sigma}\right) -
 \lambda\right)\, x_n =0 \quad \mbox{ for all  primes } n.
\end{equation}
Assume that $x_n\neq 0$ for infinitely many primes $n$. Then necessarily  $\lambda = 0$ as all other terms in the parenthesis tend to zero as $n$ tends to infinity. In such a case, we would have
\[
|x_n|^2 = \left(\mu +2 t \|\psi\|^2_{-\sigma}\right) n^{2\sigma},
\]
but the right hand side tends to infinity, which is a contradiction. Therefore only finitely many $x_n$ can be nonzero. This is consistent with the case $\sigma =0$ resolved in Subsection \ref{Subsect_finite_sols}. 
We do not know if a similar statement could be made for all $N>1$. In other words, \emph{it is an open problem whether or not the minimizers can be infinitely supported.    }

\section{Summary}
We have introduced a type of quantum dynamics of bosons which involves a nonlocal nonlinearity. We have analyzed its relation to the core quantum theory, which revolves around the canonical and generalized bosons. We have also provided examples of solutions of this dynamic, using the methods of harmonic analysis specifically for the group of fractions. We have demonstrated existence of approximate solutions in the finite-particle spaces. Toward this end, we have analysed certain nonlinear functionals constructed with the use of adapted Sobolev spaces. We emphasize the divergence of this framework from the known geometric theories. Here, there is no underlying manifold. Instead the central role is played by $\mathbb{T}^\omega$, i.e. the dual of the multiplicative group of fractions.


\section{Appendix} \label{Section_Appendix}

\subsection{Harmonic analysis on the multiplicative group of positive rationals } \label{Appendix_FT_Q_plus}

It was demonstrated in \cite{SF22} that the creation and annihilation operators (\ref{on_arithmetic}), together with the Fock space in its $\ell_2(\mathbb{N})$ representation, admit a type of Fourier-dual representation. Below, we briefly outline harmonic analysis on the group of positive rationals, which furnishes this dual picture. The general theory of abstract harmonic analysis is presented in standard textbook expositions such as \cite{Rudin} and \cite{Folland}. However, such general treatments tend to obscure some of the important idiosyncrasies of specific models. Harmonic analysis on $\mathbb{Q}_+$ has been used in Number Theory \cite{Elliott} but, to our knowledge, not in quantum theory prior to the aforementioned article.

First note the isomorphism of abelian groups:
\begin{equation}\label{rationals}
 \mathbb{Q}_+ \equiv \bigoplus\limits_{p: \in \mathcal{P}} \mathbb{Z} \quad \mbox{ given by the prime factorization  } \quad \mathbb{Q}_+ \ni w = \prod_{p\in \mathcal{P}} p^{a_p}, \, a_p \in\mathbb{Z}.
\end{equation}
In light of this, the dual group of $\mathbb{Q}_+$ is
\[
\mathbb{T}^\omega= \prod_{p \in \mathcal{P}} U(1).
\]
 It is often  useful to identify $\mathbb{T}^\omega$ with the set of completely multiplicative functions $\chi: \mathbb{Q}_+\rightarrow U(1)$, i.e. functions that satisfy $\chi(uw)= \chi(u)\chi(w)$. Namely,
\begin{equation}\label{mult_chi}
\prod_{p \in P} U(1) \ni (\theta_2, \theta_3, \theta_5, \ldots ) \,\, \mbox{ corresponds to } \chi \mbox{ characterized by }\,\, \chi(p) = p^{i \theta_p} , p \in \mathcal{P}.
\end{equation}
It is also useful to introduce a change of variable setting $ \mu_p = \theta_p\,\log p/(2\pi)$,
so that
\[
\frac{\log p}{2\pi}\int\limits_{0}^{2\pi/\log p } p^{ik\theta_p} \, d\theta_p =
\int\limits_{0}^{1} e^{2\pi ik\mu_p} \, d\mu_p = \left\{\begin{array}{cc}
                                                         1 & k = 0 \\
                                                         0 & k \neq 0
                                                       \end{array}\right. .
\]
When equipped with the product topology $\mathbb{T}^\omega$ is, by virtue of the Tychonoff Theorem, a compact space. Secondly, it admits a unique Borrel measure $d\vec{\mu}$, which satisfies
\[
d\vec{\mu} \left( (\sigma_2, \beta_2]  \times  \ldots \times (\sigma_p, \beta_p] \times (0,1]  \times (0,1] \times \ldots  \right) =
|\beta_2 - \sigma_2| \ldots |\beta_p - \sigma_p|.
\]
In particular, $d\vec{\mu} ( \mathbb{T}^\omega) =1 $, i.e. the measure is probabilistic. At the same time, $\mathbb{Q}_+$ itself is equipped with the discrete (counting) measure.

Utilizing these identifications---identifying $\chi$ with  $(\theta_2, \theta_3, \theta_5, \ldots )$ and with $(\mu_2, \mu_3, \mu_5, \ldots )$----the Fourier transform appears in several different guises. First, it is defined via:
 \begin{equation}\label{FT_def1}
   \hat{f}(\chi) = \sum_{w\in \mathbb{Q}_+} f(w)\,\chi(w)^* \mbox{ where } f:\mathbb{Q}_+ \rightarrow \mathbb{C}.
 \end{equation}
This defines $\hat{f}: \mathbb{T}^\omega\rightarrow \mathbb{C}$ with $\chi$ as its argument. The inverse transform is then given by
\begin{equation}\label{FT_Inv_def1}
  f(w) = \int \hat{f}(\chi)\, \chi(w)\, d\vec{\mu}(\chi) = \int\limits_{0}^{1} \int\limits_{0}^{1} \int\limits_{0}^{1} \ldots \hat{f}(\chi)\, \chi(w) \, d\mu_2\, d\mu_3\, d\mu_5\ldots
\end{equation}
This  notation is useful in particular when expressing the following fundamental properties: First, for a fixed arbitrary $u \in \mathbb{Q}_+ $, a direct calculation shows
\begin{equation}\label{mult}
 \mbox{Let } g(w): = f(uw) \mbox{ for all } w \in \mathbb{Q}_+.  \mbox{ Then } \hat{g}(\chi) = \chi(u^{-1}) \hat{f}(\chi).
\end{equation}
Second, note that the measure $d\mu(\chi)$ is invariant with regards to circular shifts along the $U(1)$ components. Thus, for a fixed collection $\vec{\nu} = (\nu_2, \nu_3, \nu_5, \ldots)$ if we define $\hat{g}$ via
\[
\hat{g}(\vec{\mu} ) : = \hat{f}(\vec{\mu} - \vec{\nu} ),
\]
then $\|\hat{f}\| = \| \hat{g} \|$ since $d\vec{\mu}$ is shift invariant. Calculating the inverse transform we readily obtain
\begin{equation}\label{shift}
  g(w) = \chi_{\vec{\nu}}(w) \, f(w), \quad \mbox{ where } \quad \chi_{\vec{\nu}} \equiv (\nu_2, \nu_3, \nu_5, \ldots).
\end{equation}

We now turn attention to the Hilbert-space theoretic aspects. In some ways it is more akin to that of $L_2(\mathbb{R})$, rather than $L_2(U(1))$. That is because the characters are not square integrable functions, indeed:
\[
\|\chi\|^2 =  \sum_{w\in \mathbb{Q}_+} \chi(w)\, \chi(w)^* =  \sum_{w\in \mathbb{Q}_+} 1 = \infty .
\]
We will briefly discuss the Parseval identity. On one hand, it follows from the general Pontryagin Theorem. On the other hand, it is instructive to observe it directly in this context. The calculation becomes more explicit with the use of (\ref{rationals}) to identify rational numbers with finitely supported sequences of integers:
\[
\vec{w} = (w_2, w_3, w_5, \ldots) \in \bigoplus\limits_{p: \in P} \mathbb{Z}.
\]
Accordingly, we use the notation:
$
\vec{a}\cdot \vec{\mu} = w_2\mu_2 + w_3\mu_3 + w_5\mu_5 + \ldots \in \mathbb{C}.
$
We can express the Fourier transform of the point measure:
\[
\delta_{\vec{w}} (x ) \mapsto e^{2\pi i \vec{w}\cdot \vec{\mu}} ,
\]
so that $x$ and $\hat{\mu}$ are dual variables.
Thus, for an arbitrary function $f$ on $\mathbb{Q}_+$ we have
\begin{equation}\label{FT_def2}
  f(w) = \sum_{\vec{w}} f(\vec{w}) \, \delta_{\vec{a}} (w ) \quad \mapsto \quad \hat{f}(\vec{\mu}) = \sum_{\vec{w}} f(\vec{w}) \, e^{2\pi i \vec{w}\cdot \vec{\mu}} ,
\end{equation}
where the summation is over all $\vec{w} \in \bigoplus\limits_{p: \in P} \mathbb{Z}$. It is easy to see that definitions (\ref{FT_def1}) and (\ref{FT_def2}) are equivalent.
Now, suppose $f: \mathbb{Q}_+ \rightarrow \mathbb{C}$ is square summable, i.e.
\begin{equation}\label{l2_Q_+}
\|f\|^2 = \sum_{w\in \mathbb{Q}_+} |f(w)|^2 = \sum\limits_{\vec{w}}\,
| f(\vec{w})|^2 < \infty .
\end{equation}
Note that
\begin{equation}\label{intexp}
  \int d\vec{\mu} \,\, e^{2\pi i \vec{w}\cdot \vec{\mu}} = \left\{ \begin{array}{cc}
                                                                  1 & \vec{w} = 0 \\
                                                                  0 & \mbox{ oth.}
                                                                \end{array}
  \right.
\end{equation}
It is easily seen that  $\|\hat{f}\|  = \|f\|$ (Parseval identity).
This is equivalent to stating that as $\vec{a}$ runs over all finitely supported sequences of integers $\delta_{\vec{a}}(x)$ furnish an orthonormal basis in $\ell_2(\mathbb{Q}_+)$ while $\exp(2\pi i \vec{a}\cdot \vec{\mu})$ furnish such a basis in $L_2(\mathbb{T}^\omega, d\vec{\mu})$, so that the map (\ref{FT_def2}) is unitary.

We denote by $H_2(\mathbb{T}^\omega, d\vec{\mu})$ the subspace of functions whose only nonzero coefficients are those whose index is a natural number, i.e., the coefficients with fractional indices vanish. Of great importance in what is to follow is the natural orthogonal projection:
\begin{equation}\label{Pplus}
  P_+: L_2(\mathbb{T}^\omega, d\vec{\mu}) \rightarrow H_2(\mathbb{T}^\omega, d\vec{\mu}).
\end{equation}
This is unitarily equivalent to the projection (denoted by the same symbol) $P_+: \ell_2(\mathbb{Q}_+) \rightarrow \ell_2(\mathbb{N})$.
 Note that the Fourier transform introduced in the previous section establishes an equivalence of the spaces and subspaces as follows:
\begin{equation}\label{diagram}
\begin{array}{ccc}
  \bigotimes\limits_{p\in\mathcal{P}} H_2(U(1)) & \subset & \bigotimes\limits_{p\in\mathcal{P}} L_2(U(1))\\
  \updownarrow\mbox{FT} & & \updownarrow\mbox{FT} \\
  \ell_2(\mathbb{N}) & \subset & \ell_2(\mathbb{Q}_+)
\end{array}
\end{equation}
Even though the arithmetic model of the Fock space corresponds to the Hecke subspace, the group-duality based theory requires that in order to understand the whole picture we cannot loose the sight of the entire $ \ell_2(\mathbb{Q}_+)$.

\subsection{Remarks on the functional $\ket{\psi} \mapsto\|\psi \odot \psi\|^2$: the case of $\ket{\psi}$ supported on a pairwise co-prime set}\label{Appendix_coprime}

We say that a set of positive integers $S\subset \mathbb{N}$ is a pairwise coprime if any two distinct $m,n \in S$ are coprime, i.e., $\mbox{gcd}(m,n) =1$. We now make an Ansatz:
\begin{equation}\label{Ansatz_coprime}
 \ket{\psi} = \sum_{n \in S} x_n \ket{n},\quad x_n \neq 0 \mbox { for all } n\in S (\mbox {where } S \mbox{ is pairwise coprime})
\end{equation}
In what follows we will use the radial variables $x_n = r_ne^{i\theta_n}$, $r_n>0$. We also assume the constraint
\begin{equation}\label{normal_coprime}
  \|\psi\|^2 = \sum_{n\in S} r_n^2 =1.
\end{equation}
As it turns out, this assumption has the effect of simplifying the functional $\ket{\psi} \mapsto\|\psi \odot \psi\|^2$ and, particularly, the determination of its critical points. Indeed, in this case, 
\begin{equation}\label{Veronese}
  \ket{\psi} \odot \ket{\psi} = \sum_{n} x_n^2 \,\ket{n^2}\, +\, 2\sum_{(m,n): m < n} x_m x_n \ket{mn}. 
\end{equation}
Note that all the Fock basis vectors in this formula are mutually distinct. (It is interesting to note that in this case the map $\ket{\psi} \mapsto \ket{\psi} \odot \ket{\psi}$ is akin to the classical Veronese map.)  We have
\begin{equation}\label{F_Veron}
   F =\|\psi \odot \psi\|^2 =  \sum_{n} r_n^4  + 4 \sum_{(m,n): m< n} r_m^2 r_n^2. 
\end{equation}
In particular, in this case $F$ does not depend on the phases $\theta_n$. We will now summarize its properties as a function of real variables $r_n$.  
Note that $F$ is the sum off all entries in the matrix, say $M$, whose rows and colums are indexed by $n\in S$, such that $M_{n,n} = r_n^4$ and $M_{k,l} = 2r_k^2r_l^2$ whenever $k\neq l$. Evaluating the sum, say, column by column, and using the constraint (\ref{normal_coprime}), one readily finds: 
\begin{equation}\label{F_Veron_simp}
   F =  2 - \sum_{n} r_n^4. 
\end{equation}
It is then easily seen that $1<F\leq 2-1/|S|$, where $|S|$ is the number of elements in $S$. Note also that the maximum can only be attained  if  $|S|$ is finite. When it is infinite, $F$ assumes values arbitrarily close to the value of $2$, but cannot attain it. The maximum occurs at the only critical point of $F$, subject to (\ref{normal_coprime}) and the condition $r_n>0$ for all $n\in S$. It occurs when all $r_n$ are equal, i.e.,  $r_n = 1/\sqrt{|S|}$ for all $n\in S$, . The minimum of $F$ is attained on the boundary of the region, when all but one coefficients turn to zero.

\subsection{Remarks on the functional $\ket{\psi} \mapsto\|\psi \odot \psi\|^2$: the case of two particle states supported on two sites}\label{subsection_three_on2}

In light of the previous subsection, it is interesting to address that case when (\ref{Ansatz_coprime}) does not hold, i.e., Fock basis vectors need not be indexed by mutually co-prime integers. As it turns out, in such a case the functional in question displays interesting dependence on the phase variables $\theta_n$, and its critical points impose a constraint on the phases. 
Here, we analyse the simplest nontrivial case which is handled via calculus. Namely, we consider a state $\ket{\psi} = x_0 \ket{p^2} + x_1\ket{pq} + x_2 \ket{q^2}$, where $p,q$ are distinct primes. Again, we will use polar variables $x_n = r_n e^{i\theta_n}$. Assume the normalization
\begin{equation}\label{normal}
  |x_0|^2 + |x_1|^2 + |x_2|^2 =1.
\end{equation}
We are primarily interested in the case in which $r_0,r_1,r_2>0$. 
Note that 
\begin{equation}\label{psi_cdot_psi}
  \ket{\psi} \odot \ket{\psi} = x_0^2 \ket{p^4} + 2x_0x_1 \ket{p^3q} + (2x_0x_2 + x_1^2)\ket{p^2q^2} + 2x_1x_2 \ket{q^3p} + x_2^2 \ket{q^4}.
\end{equation}
We will analyse the critical points (CPs) of the function
\begin{equation}\label{theF_first}
     \begin{array}{cccl}
                  F  = F(x_0, x_1, x_2)   = &               \|\ket{\psi} \odot \ket{\psi} \|^2  & = &|x_0|^4 + 4|x_1|^2 \left(|x_0|^2 + |x_2|^2 \right)+ |x_2|^4 + |2x_0x_2 +x_1^2|^2 \\
                                &   &  &  \\
                                 &  & = & r_0^4 + 4r_1^2 \left(r_0^2 + r_2^2 \right)+ r_2^4 + |2x_0x_2 +x_1^2|^2. \\
                                \end{array}
\end{equation}
Note that the action of the circle group via the uniform phase shift: 
\begin{equation}\label{phase_shift}
  (x_0,x_1,x_2) \mapsto e^{i\phi}(x_0,x_1,x_2)
\end{equation}
leaves $F$ invariant. It is easily seen  by direct analysis that
\begin{equation}\label{Fgr1}
  F(x_0, x_1, x_2) \geq 1\quad \mbox{ provided } (\ref{normal}) \mbox{ holds. } 
\end{equation}  
\vspace{.5cm}

\noindent
\textbf{A direct argument to find the maximum:} Since $F$ is a smooth function, it attains a maximum on the 5-sphere defined by (\ref{normal}). Let the point $(r_0\, e^{i\theta_0}, r_1\, e^{i\theta_1},r_2\, e^{i\theta_2} )$ be a point where $F$ attains its maximum. Only the last term in (\ref{theF_first}) depends on the phases. Keeping the amplitudes $r_0,r_1,r_2$ fixed, this term is maximized when $2x_0x_2$ and $x_1^2$ have equal phases, i.e.,
\begin{equation}\label{phase_constraint_2part_2sites}
   2\theta_1 = \theta_0 + \theta_2\, (\mbox{mod } 2\pi).  
\end{equation}
Therefore, at the maximum 
\[
|2x_0x_2 + x_1^2| = 2r_0 r_2 + r_1^2 = 1 - (r_0-r_2)^2.
\] 
This expression is maximized when $r_0 = r_2$. 
In addition, on each concentric quarter-circles $r_0^2+r_2^2 = r^2$, the expression $$r_0^4 + 4 \left(1 - r_0^2 - r_2^2 \right)\left(r_0^2 + r_2^2 \right)+ r_2^4$$ attains its maximum when $r_0 = r_2$. It follows that $F$ attains its maximum on the line  $r_0=r_2$, where it assumes the form:
\[
F = 2r_0^4 + 4(1-2r_0^2)2r_0^2 +1.
\]
This function assumes its maximum when $r_0 = \sqrt{2/7}$. This identifies the maximum as a unique point up to the uniform phase shift phase, namely:
\[
e^{i\phi}\,\left(\sqrt{\frac{2}{7}}\, e^{i\theta_0}, \sqrt{\frac{3}{7}}\,e^{i(\theta_0+\theta_2)/2}, \sqrt{\frac{2}{7}}\, e^{i\theta_2}\right).
\] 

\vspace{.5cm}

\noindent
\textbf{A detailed pedestrian analysis via calculus:}
Developing the last term in (\ref{theF_first}), we obtain:
\begin{equation}\label{theF}
    F = |x_0|^4 + |x_1|^4+ |x_2|^4 + 4\left(|x_0|^2 |x_1|^2 + |x_1|^2|x_2|^2 + |x_2|^2|x_0|^2 \right) + 2(x_0x_1^{*2}x_2 + x_0^*x_1^{2}x_2^*),
\end{equation}
subject to the constraint (\ref{normal}). The problem of localizing the CPs is simplified by modifying the function $F$ as follows:
\begin{equation}\label{theF_modified}
    F =1 + 2\left(|x_0|^2 |x_1|^2 + |x_1|^2|x_2|^2 + |x_2|^2|x_0|^2 \right) + 2(x_0x_1^{*2}x_2 + x_0^*x_1^{2}x_2^*).
\end{equation}
Indeed, the functions defined in (\ref{theF}) and (\ref{theF_modified}) are identical on the 5-sphere defined by the constraint. (We point out there are more equivalent forms of the function $F$ on the 5-sphere.) The conditions for $(x_0, x_1, x_2)$ to be a CP are found by differentiating with respect to the variables $x_j^*$, namely:
\begin{equation}\label{array_1}
\left\{\begin{array}{lcc}
  (|x_1|^2 + |x_2|^2)x_0 + x_1^2x_2^* &=& \lambda x_0 \\
  &&\\
  (|x_0|^2 + |x_2|^2)x_1 + 2 x_0 x_1^*x_2 &=& \lambda x_1 \\
  && \\
 (|x_1|^2 + |x_0|^2)x_2 + x_1^2x_0^* &=& \lambda x_2 
 \end{array}\right.
\end{equation}
Note that (\ref{array_1}) is equivalent to:
 \begin{equation}\label{array_2}
\left\{\begin{array}{lcc}
  (r_1^2 + r_2^2)r_0 + r_1^2r_2\, e^{i(2\theta_1 - \theta_0 - \theta_2)} &=& \lambda r_0 \\
  &&\\
  (r_0^2 + r_2^2)r_1 + 2 r_0 r_1 r_2\, e^{i(\theta_0 + \theta_2 - 2\theta_1)}&=& \lambda r_1 \\
  &&\\
 (r_1^2 + r_0^2)r_2 + r_1^2r_0\, e^{i(2\theta_1 - \theta_0 - \theta_2)}&=& \lambda r_2 
 \end{array}\right.
\end{equation}
As an immediate conclusion, we obtain a constraint on the phases:
\begin{equation}\label{phase_constraint}
  \mbox{ either }\,\, \theta_0 + \theta_2 - 2\theta_1 = 0\,  \mod 2\pi\,\, (\mbox{ Case } +) ,\quad \mbox{ or }\quad \theta_0 + \theta_2 - 2\theta_1 = \pi\, \mod 2\pi\,\, (\mbox{ Case } -) . 
\end{equation}
In either one of these cases the system of equations (\ref{array_2}) reduces to a system of three equations with three real unknowns:
 \begin{equation}\label{array_3}
\left\{\begin{array}{lcc}
  (r_1^2 + r_2^2)r_0 \pm r_1^2r_2 &=& \lambda r_0 \\
  &&\\
  (r_0^2 + r_2^2)r_1 \pm 2 r_0 r_1 r_2 &=& \lambda r_1 \\
  &&\\
 (r_1^2 + r_0^2)r_2 \pm r_1^2r_0 &=& \lambda r_2 
 \end{array}\right.
\end{equation}
Next, we note that (\ref{array_3}) is the condition for CPs for functions
\begin{equation}\label{theF_pm}
    F_\pm = r_1^2 (r_0 \pm r_2)^2 + r_0^2r_2^2,
\end{equation}
on the section of the sphere defined via $r_0^2 + r_1^2 + r_2^2 =1 $, $r_0, r_1, r_2 >0$.
Note that functions $F_\pm$ can be represented by 
\begin{equation}\label{theF_pm_2vars}
    F_\pm(r_0,r_2) = (1- r_0^2 -r_2^2) (r_0 \pm r_2)^2 + r_0^2r_2^2,\quad r_0, r_2> 0,\,\, r_0^2+ r_2^2 < 1.
\end{equation}
 The CPs of $F_\pm(r_0,r_2)$ are characterised by the equations for the vanishing partial derivative of these functions, i.e.,
 \begin{equation}\label{array_4}
\left\{\begin{array}{lcc}
  -(r_0 \pm r_2)^2 r_0 + (1- r_0^2 -r_2^2)(r_0 \pm r_2) + r_0r_2^2 &=& 0 \\
  &&\\
  -(r_0 \pm r_2)^2 r_2 \pm (1- r_0^2 -r_2^2)(r_0 \pm r_2) + r_0^2r_2 &=& 0
 \end{array}\right.
\end{equation}

It is straightforward that in the Case $+$ we have only one solution; namely, $r_0 = r_2 = \sqrt{\frac{2}{7}}$. Moreover, the Hessian at this critical point is
\[
\nabla^2 F_+ = -\frac{2}{7}
\left(
  \begin{array}{cc}
    19 & 9 \\
    9 & 19 \\
  \end{array}
\right),
\]
which has two negative eigenvalues, indicating that this CP is a local maximum. 

In the Case $-$, examining the sum and the difference of the two equations, leads to
 \begin{equation}\label{sum_diff}
\left\{\begin{array}{lcc}
  r_0^2 - 3r_0r_2+  r_2^2 &=& 0 \\
  &&\\
   3 r_0^2 - r_0r_2+  3r_2^2 &=& 2,
 \end{array}\right.
\end{equation} 
so that one can see $r_0r_2 = 1/4$, and $r_0^2 + r_2^2 = 3/4$. The solutions are $(r_0,r_2) = (\frac{\sqrt{5}+1}{4}, \frac{\sqrt{5}-1}{4} )$ and $(r_0,r_2) = (\frac{\sqrt{5}-1}{4}, \frac{\sqrt{5}+1}{4} )$.
The Hessian at the first point is
\[
\nabla^2 F_- = \frac{1}{4}
\left(
\begin{array}{cc} 
-5\,\sqrt{5}-1 & 6 \\ 
6 & 5\,\sqrt{5}-1 
\end{array}\right),
\]
which has one positive and one negative eigenvalue. Due to symmetry between these two CPs, both are saddle points.
\vspace{.5cm}

\noindent
\emph{Remarks.} It is easy to find solutions of the system (\ref{array_1}) which lie on the boundary of the region of interest: These are:
 \begin{equation}\label{sols_simp}
 (e^{i\theta_0},0,0),\quad  (0,e^{i\theta_1},0),\quad (0,0,e^{i\theta_2}), \mbox{ and } \left(\frac{e^{i\theta_0}}{\sqrt{2}}, 0, \frac{e^{i\theta_2}}{\sqrt{2}}\right),
 \end{equation}
 where $\theta_0, \theta_1,\theta_2$ are arbitrary. Since $F(1,0,0)=F(0,1,0)=F(0,0,1)=1 $, the first three CPs are minima. Note that the condition $|x_1| = 0$ determines a submanifold of the 5-sphere, a 3-sphere. The point $\left(\frac{e^{i\theta_0}}{\sqrt{2}}, 0, \frac{e^{i\theta_2}}{\sqrt{2}}\right)$ may be considered as a CP point of $F$ restricted to the 3-sphere. This solution is within the scope of the cases considered in Subsection \ref{Appendix_coprime}. As was established this point is a maximum in the 3-sphere. However, its behavior is more complex in the 5-sphere. Namely, $F_+(r, \sqrt{1-2r^2},r)$ decreases while  $F_-(r, \sqrt{1-2r^2},r)$ increases as $r$ approaches $1/\sqrt{2}$ from the left, see Fig, \ref{Fig_FpFm}. This means that the behavior of $F$ near this point is sensitive to the phases at $x_0,x_1,x_2$. In other words, it is an anisotropic CP.   
We summarize the CPs for $F$ in the case of two-particle states supported in two locations:

\begin{equation}\label{eig_Umu_zero}
'\left(\begin{array}{ccccll} E & r_0 & r_1 & r_2 & \mbox{ phases} & \mbox{ type } \\ 
&&&&\\
1 & 1 & 0 & 0 & \mbox{ arbitrary }& \mbox{ min }\\
&&&&\\
 1 & 0 & 1 & 0 &\mbox{ arbitrary }& \mbox{ min }\\ 
 &&&&\\
 1 & 0 & 0 & 1 &\mbox{ arbitrary }& \mbox{ min }\\
  &&&&\\
 \frac{3}{2} & \frac{\sqrt{2}}{2} & 0 & \frac{\sqrt{2}}{2} &\mbox{ arbitrary }& \mbox{ anisotropic}\\ 
 &&&&\\
  \frac{15}{7} & \sqrt{\frac{2}{7}} & \sqrt{\frac{3}{7}}& \sqrt{\frac{2}{7}} & 2\theta_1 = \theta_0 + \theta_2 & \mbox{ max } \\
  &&&& \\
  \frac{5}{4} & \frac{\sqrt{5}+1}{4} &\frac{1}{2} & \frac{\sqrt{5}-1}{4} &  2\theta_1 = \theta_0 + \theta_2 + \pi& \mbox{ saddle }\\
  &&&&\\
    \frac{5}{4} & \frac{\sqrt{5}-1}{4} &\frac{1}{2} & \frac{\sqrt{5}+1}{4}  & 2\theta_1 = \theta_0 + \theta_2+\pi & \mbox{ saddle }
  \end{array}\right)
\end{equation}

 \begin{figure}[ht!]
\centering
\includegraphics[width=120mm]{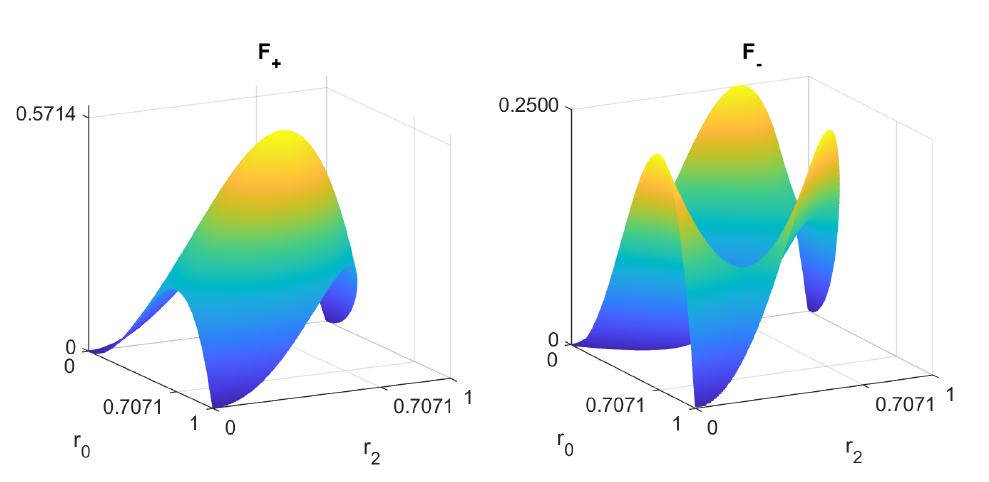}
\caption{Functions $F_+ = F_+(r_0,r_2)$ and $F_- = F_-(r_0,r_2)$.  Both graphs are symmetric with respect to the interchange $r_0 \leftrightarrow  r_2$. $F_+$ attains the maximum where $r_0 = r_2 = \sqrt{14}/7$. $F_-$ has saddle points at $(\frac{\sqrt{5} +1}{4}, \frac{\sqrt{5} -1}{4})$  and at $(\frac{\sqrt{5} -1}{4}, \frac{\sqrt{5} +1}{4})$.   Note the difference of behavior at the singular point on the boundary where $r_0 = r_2 = 1/\sqrt{2}$.
}
\label{Fig_FpFm}
\end{figure}

\subsection{The Euler-Lagrange equation in tempered norms.} \label{Appendix_EL}

We need to determine the form of the formal adjoint of a matrix operator, denoted $A$, with respect to the tempered (i.e., $\langle \, \, | \,\, \rangle_{-\sigma}$) norm. Let $\ket{\psi} = \sum_{n \in S} x_n \ket{n}$ and $\ket{\phi} = \sum_{n \in S} y_n \ket{n}$. We have
\begin{eqnarray*}
 \langle \psi \,|\, A |\phi \rangle_{-\sigma}  &=& \sum_{n} x_n^*\, \sum_{k} A_{nk}y_k\, n^{-2\sigma} \\
   &&  \\
   &=& \sum_{k} \, \left(\sum_{n} A_{nk}^* n^{-2\sigma} x_n \right)^* y_k  \\
   &&  \\
   &=& \sum_{k} \, \left(\sum_{n} k^{2\sigma} A_{nk}^* n^{-2\sigma} x_n \right)^* y_k\, k^{-2\sigma} \\
   && \\
   &=&  \langle \phi \, | A_\sigma^\dagger|\psi \rangle_{-\sigma}^*, \quad \mbox{ where } \quad (A_\sigma^\dagger)_{kn} = k^{2\sigma} A_{nk}^* n^{-2\sigma}.
\end{eqnarray*}
In particular, a diagonal operator with real entries is formally self-adjoint regardless of the value of $\sigma$.

Let $X$ be a matrix operator determined by $\ket{\psi}$ as in (\ref{theX}). The Euler-Lagrange (E-L) equation for the functional $\ket{\psi} \mapsto 4^{-1}\|\ket{\psi} \odot \ket{\psi}\|_{-\sigma}^2$ has the form $ X_\sigma^\dagger X \ket{\psi} =\lambda \, \ket\psi$. Let us take a closer look at this condition. First, let $\ket\phi = X\ket\psi$, so that
\[
y_n = \sum_{d|n} x_d x_{n/d}.
\]
Let us evaluate the $k$-th coordinate of $X_\sigma^\dagger \ket\phi $, i.e., 
\begin{eqnarray*}
\langle \delta_k\, |\, X_\sigma^\dagger \ket\phi  &=& \sum_{n} (X_\sigma^\dagger)_{kn}\,y_n 
=k^{2\sigma} \sum_n  X_{nk}^* n^{-2\sigma} y_n \\
&& \\
&=& k^{2\sigma} \sum_n  x_{n/k}^* n^{-2\sigma} y_n
=  k^{2\sigma} \sum_l  x_{l}^* (lk)^{-2\sigma} y_{lk} \\
&& \\
&=& 
\sum_l  x_{l}^*  \sum_{d|lk} x_d x_{lk/d} \, l^{-2\sigma} 
\end{eqnarray*}
Also,
\[
\langle \delta_k\, |\, X_\sigma^\dagger \ket\phi_{-\sigma}  =
k^{-2\sigma}\sum_l  x_{l}^*  \sum_{d|lk} x_d x_{lk/d} \, l^{-2\sigma}. 
\]
 Thus, the energy functional (\ref{tempered_BH}) results in the E-L equation in the form (\ref{tempered_EL}).

\end{document}